\documentclass[journal, onecolumn, 12pt]{IEEEtran}
\usepackage{graphicx}
\usepackage{setspace}
\usepackage{caption}
\usepackage{subcaption}
\usepackage{bm, amsmath, amssymb, amsthm}
\usepackage{amsfonts}
\usepackage {amssymb,amsmath}
\usepackage{cite}
\usepackage{epstopdf}
\usepackage{booktabs}
\usepackage{microtype}
\usepackage{color}
\DeclareGraphicsExtensions{.pdf,.png,.jpg}

\bibliographystyle{IEEEtran}

\newcommand {\Define} {\stackrel {\Delta} {=}  }
\newcommand{\mya}{\mathrel{\overset{\makebox[0pt]{{\tiny(a)}}}{=}}}
\newcommand{\myb}{\mathrel{\overset{\makebox[0pt]{{\tiny(b)}}}{=}}}
\newcommand{\myc}{\mathrel{\overset{\makebox[0pt]{{\tiny(c)}}}{=}}}

\newcommand{\mygtb}{\mathrel{\overset{\makebox[0pt]{{\tiny(b)}}}{\geq}}}

\newtheorem{theorem}{Theorem}
\newtheorem{corollary}{Corollary}

\usepackage{cite}

\begin{document}

\title{OTFS Based Random Access Preamble Transmission For High Mobility Scenarios}
\author{\IEEEauthorblockN{Alok Kumar Sinha, Saif Khan Mohammed, P. Raviteja, Yi Hong and Emanuele Viterbo}
\IEEEauthorblockA{ \thanks{\copyright~ 2020 IEEE. Personal use of this material is permitted. Permission from IEEE must be obtained for all other uses, in any current or future media, including reprinting/republishing this material for advertising or promotional purposes, creating new collective works, for resale or redistribution to servers or lists, or reuse of any copyrighted component of this work in other works.}}
\IEEEauthorblockA{ \thanks{Alok Kumar Sinha is with the Bharti School of Telecommunication Technology and Management, Indian Institute of Technology Delhi (IIT Delhi) and Saif Khan Mohammed is with the Department of Electrical Engineering, IIT Delhi, New Delhi, India. Email: saifkmohammed@gmail.com. Saif Khan Mohammed is also associated with the Bharti School of Telecommunication Technology and Management, IIT Delhi. This work is an outcome of the Research and Development work undertaken in the project under the Visvesvaraya PhD Scheme of Ministry of Electronics and Information Technology, Government of India, being implemented by Digital India Corporation (formerly Media Lab Asia). This work was also supported by, the EMR funding from the Science and Engineering
Research Board (SERB), Department of Science and Technology (DST),
Government of India, and the Prof. Kishan and Pramila Gupta Chair at IIT Delhi. P. Raviteja (raviteja.patchava@monash.edu), Yi Hong (yi.hong@monash.edu) and Emanuele Viterbo (emanuele.viterbo@monash.edu) are with the Department of Electrical and Computer Systems Engineering, Monash University, Australia.}}
}
\maketitle

\vspace{-2mm}
\begin{abstract}
We consider the problem of uplink timing synchronization for Orthogonal Time Frequency Space (OTFS) modulation based systems where information is embedded in the delay-Doppler (DD) domain. For this, we propose a novel Random Access (RA) preamble waveform based on OTFS modulation. We also propose a method to estimate the round-trip propagation delay between a user terminal (UT) and the base station (BS) based on the received RA preambles in the DD domain. This estimate (known as the \emph{timing advance} estimate) is fed back to the respective UTs so that they can {\em advance} their uplink timing in order that the signal from all UTs in a cell is received at the BS in a time-synchronized manner. Through analysis and simulations we study the impact of OTFS modulation parameters of the RA preamble on the probability of timing error, which gives valuable insights on how to choose these parameters. Exhaustive numerical simulations of high mobility scenarios suggests that the timing error probability (TEP) performance of the proposed OTFS based RA is much more robust to channel induced multi-path Doppler shift when compared to the RA method in Fourth Generation (4G) systems.            
\end{abstract}

\begin{IEEEkeywords}
	Timing Synchronization, Doppler, OTFS, Random Access, Timing Advance.
\end{IEEEkeywords}
\section{Introduction}
Recently, a new modulation scheme called Orthogonal Time Frequency Space (OTFS) modulation has been introduced, where the information symbols are transmitted in the delay-Doppler (DD) domain instead of the time-frequency domain (as in Orthogonal Frequency Division Multiplexing (OFDM) systems) \cite{OTFSpaper, OTFSOFDM,Hadaniwhitepaper, SRakib, otfsmmwave}. In OTFS modulation, the delay-Doppler domain is $T$ seconds wide along the delay domain and $\Delta f = 1/T$ Hz wide along the Doppler domain. The delay domain is partitioned into $M$ sub-divisions of $T/M$ seconds each and the Doppler domain is partitioned into $N$ sub-divisions of $\Delta f/N$ Hz each. The combination of a sub-division along the delay domain and a sub-division along the Doppler domain is referred to as a delay-Doppler resource element (DDRE). It has been shown that for reliable communication of information, OTFS modulation is more robust to channel induced Doppler spread when compared to OFDM based 4G systems \cite{Hadaniwhitepaper}. This is because, unlike OFDM systems where channel induced Doppler shift spreads information sent on one sub-carrier to {\em all} sub-carriers, in OTFS modulation an information symbol transmitted on a DDRE {\em does not} spread as much interference to all other DDREs. Rather,
an information symbol sent on a particular DDRE is received mostly only on those DDREs which are separated from the transmit DDRE in the delay domain by a duration approximately equal to the delay of some channel path and in the Doppler domain by a shift approximately equal to the Doppler shift of the same channel path. Depending on the number of distinct paths, the information symbol could be received on multiple DDREs, which can then be coherently combined to achieve delay-Doppler diversity \cite{SRakib}. Further, each DD domain information symbol sees the same constant channel gain, which greatly simplifies the transmitter and receiver design \cite{chocks,emanuele, emanuele2, Farhang, EmanueleTWC, mimootfs}. The constant channel gain across the entire DD domain also helps in reducing the  overhead of frequent channel estimation and feedback. Pilot aided channel estimation in the delay-Doppler domain has been considered in \cite{EviterboEst, ChocksEst}. Further, OTFS modulation can be implemented as a precoding operation (from DD domain to time-frequency domain) followed by OFDM modulation (from time-frequency domain to time-domain), and similarly the received time-frequency OFDM signal can be converted back to the DD domain \cite{Hadaniwhitepaper}.\footnote{\footnotesize{This compatibility of OTFS modulation with existing OFDM based 4G/5G systems will be useful for industry, as it enables the use of OTFS modulation for high mobility scenarios like high speed train, vehicle-to-vehicle communication, etc., \cite{OTFSOFDM,Hadaniwhitepaper}.}}    

Several multiple-access schemes have also been proposed for OTFS based systems in \cite{patent2, OTFSMA, chocksMA }. In \cite{patent2}, the DDREs allocated to different user terminals (UTs) is separated by guard bands in order to reduce multi-user interference (MUI). However, with this approach the required size of the guard bands along the delay domain would be large, especially when the cell size is large (as in rural areas). This is because, in the absence of uplink timing synchronization, the difference between the time of arrival of uplink signals from two different UTs can be as high as the round-trip propagation delay between the base station (BS) and a cell-edge UT. With uplink timing synchronization, the UTs adjust their uplink timing such that their signals arrive at the BS in a time-synchronized manner, i.e., the difference between the time of arrival of signals from any two UTs at the BS is no more than the delay spread of the channel. Since guard bands are an overhead, uplink timing synchronization is necessary for guard-band based multiple-access methods.  

In Fourth Generation (4G) systems, uplink timing synchronization of a UT is achieved through estimation of the round-trip propagation delay (also known as \emph{timing advance}) between that UT and the BS. This estimation is based on the random access (RA) signal received at the BS and is followed by uplink timing correction at each UT based on the
timing advance (TA) estimate fedback to that UT by the BS \cite{4GLTE}. In 4G-LTE (Long Term Evolution) systems, the presence of carrier frequency offset (e.g., due to Doppler spread in high mobility scenarios) is known to significantly deteriorate the accuracy of the TA estimate acquired at the BS \cite{cfora},\cite{LTE4GTA}, due to which the required cyclic prefix length of OFDM symbols needs to be significantly higher than the channel delay spread so as to avoid inter-symbol interference.
This increases the cyclic prefix overhead and therefore reduces the overall system throughput. Fifth Generation (5G) systems are expected to support high throughput rate at even higher mobility when compared to 4G systems \cite{IMT2020}, and therefore there is a need to reconsider the design of RA preamble waveform such that TA estimation is not sensitive to Doppler spread.

\begin{figure}[h]
	\vspace{-0.2 cm}
	\centering
	\includegraphics[width= 4.0 in, height= 3.0 in]{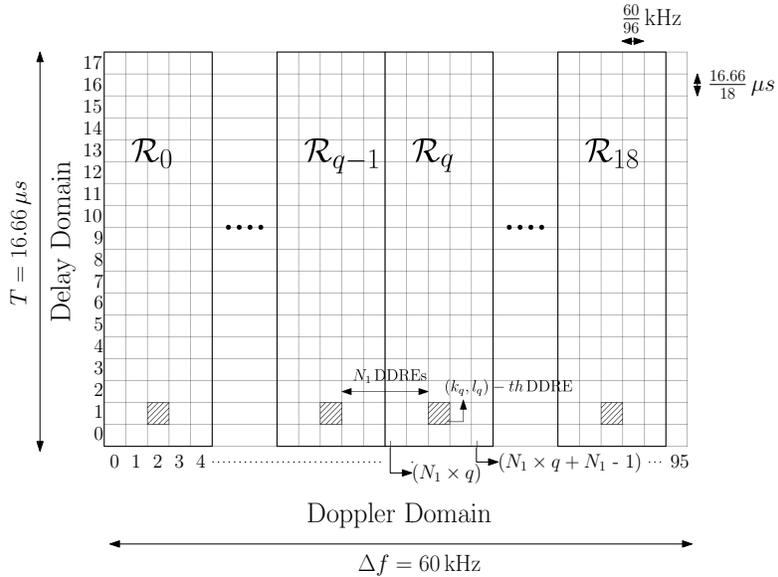}
	\vspace{-0.15 cm}
	\caption{Illustration of the proposed RA preambles in the DD domain. $T = 16.66 \mu s$, $\Delta f = 1/T = 60$ KHz, $M = 18$, $N = 96, N_1 = 5$. ${\mathcal R}_q, q = 0,1,\cdots, 18$ are the DDRE groups corresponding to the $19$ RA preambles. ${\mathcal R}_q$ and ${\mathcal R}_{q-1}$ are adjacent DDRE groups.} 
	\vspace{-0.2cm}
	\label{figrarq}
\end{figure}
{
\begin{table}[t]
\centering
\caption{Notations}
{\small
\begin{tabular}{||c|c||}
\hline
Variable & Description     \nonumber \\
\hline
$T$, $\Delta f$ &  {Delay and Doppler domain width respectively} \nonumber \\
\hline
$M$, $N$ &  {No. of sub-divisions of the delay and} \nonumber \\
&  Doppler domain respectively  \nonumber \\
\hline
$T_c$ &  {Time allocated for RA preamble transmission} \nonumber \\
& (excluding prefix and suffix) \nonumber \\
\hline
$B_c$ &  Bandwidth allocated for RA preamble transmission \nonumber \\
\hline
$G$ &  Maximum possible round-trip propagation delay \nonumber \\
&  between the BS and any UT in the cell \nonumber \\
\hline
$\nu_{max}$ &  Maximum Doppler shift of any channel path \nonumber \\
& between the BS and any UT in the cell \nonumber \\
\hline
$Q$ &  Number of UTs \nonumber \\
\hline
$R$ &  Number of RA preambles \nonumber \\
\hline
${\mathcal R}_q$ &  DDRE group allocated for  \nonumber \\
&  transmission of the $q$-th RA preamble \nonumber \\
\hline
$N_1$ &  Number of DDREs along the Doppler domain  \nonumber \\
&  in each DDRE group \nonumber \\
\hline
$P_e$ & Probability of missed detection of RA preamble \nonumber \\
& (also referred to as Timing Error Probability) \nonumber \\
\hline
$P_{fa}$ &  Probability of False Alarm \nonumber \\
\hline
$E$ &  Energy of each RA preamble \nonumber \\
\hline
$\mu$ &  False alarm threshold \nonumber \\
\hline
$\widehat { \mathrm { TA } } _ { q }$ &  Proposed timing advance (TA) estimate \nonumber \\
& for the $q$-th RA preamble \nonumber \\
\hline
$N_o$ & Power spectral density of AWGN at BS receiver \nonumber \\
\hline
$\rho$ &  Ratio of the transmitted RA preamble signal \nonumber \\
& power to the noise power at the BS receiver \nonumber \\
\hline
$k_q,l_q$ &  The Doppler domain and delay domain indices \nonumber \\
& of the DDRE where the energy of the $q$-th RA \nonumber \\
&  preamble is localized  \nonumber \\
\hline
\end{tabular}
\vspace{-2mm}
\normalsize}
\label{tabnotations}
\end{table}}
In this paper, we propose OTFS modulation based RA preamble waveform which is shown to achieve more accurate uplink timing synchronization when compared to 4G-LTE systems. For notational clarity, in Table-\ref{tabnotations} we list the meaning of variables which have been frequently used in this paper to describe the proposed TA estimation and RA preamble detection method.
Our specific contributions are:
\begin{enumerate}
\item In Section \ref{sec3} we propose OTFS modulation based RA preamble waveforms, where the RA preambles are allocated non-overlapping contiguous rectangular DDRE groups in the DD domain (a DDRE group is a collection of DDREs). For each RA preamble, energy is transmitted on only one DDRE in its group. Each DDRE group spans the entire delay domain, but is restricted to a subset of $N_1$ DDREs along the Doppler domain (see Fig.~\ref{figrarq}).
The number of RA preambles is $R = \lfloor N/N_1 \rfloor$ where $N$ is the total number of DDREs along the Doppler domain. We also propose a TA estimation method to estimate the round-trip propagation delay between a UT and the BS.
\item We derive the expression for the received RA signal in the delay-Doppler domain and subsequently we characterize the average probability of missed detection of an RA preamble transmitted by a UT (referred to as timing error probability (TEP)).
The definition of successful detection of an RA preamble transmitted by a UT is motivated by practical considerations of the OTFS waveform. Successful detection of an RA preamble transmitted by a UT is said to happen if the received RA preamble energy is greater than or equal to a threshold and the proposed TA estimate lies between the smallest and the greatest path-delay between the BS and that UT. The TEP expression is a  sum of several terms consisting of a term for the single-UT scenario and other terms for the multi-UT scenario.   
\item In Section \ref{secsingleUT}, we study the TEP for the single-UT scenario from where it is observed that the proposed TA estimation is almost insensitive to the multi-path Doppler shift. This explains the robustness of the proposed OTFS based RA method. 
\item In Section \ref{secmultiUT} we show that in a multi-UT scenario, TEP is limited by, i) the interference between RA preambles transmitted on two adjacent DDRE groups and, ii) collisions (i.e., if two or more UTs transmit the same RA preamble). We derive an analytical expression for TEP in the presence of collisions, which is in fact a lower bound to the overall TEP.
\item In Section \ref{secmultiUT} we also study the near-far scenario where a UT near the BS and another UT at the cell-edge transmit RA preambles on adjacent DDRE groups. In such a scenario, the RA preamble energy received from a UT near the BS leaks into an adjacent DDRE group of a RA preamble received from the UT at the cell-edge. This can result in a timing error event for the UT at the cell-edge. We show that this leakage can be reduced by appropriately windowing the received RA preamble in the time-domain and choosing a sufficiently large DDRE group width $N_1$.        
\item In Section \ref{sectionSelParam}, for a desired TEP, we propose the selection of the design parameters $(T, \Delta f, M, N, N_1)$ for the OTFS based RA preamble.   
\item  Exhaustive numerical simulations in Section \ref{simsection} reveal that in high mobility scenarios, the proposed OTFS based RA method achieves significantly smaller TEP when compared to the RA method in 4G-LTE. Further, with increasing maximum Doppler shift $\nu_{max}$, TEP of the RA method in 4G-LTE degrades severely when compared to that of the proposed OTFS based RA method.    
\end{enumerate}  
\section{SYSTEM MODEL AND OTFS MODULATION}
\label{sysmodel_sec}
In this paper we consider a single-cell multi-user scenario where multiple single-antenna user terminals (UTs) can initiate random access to a single-antenna base station (BS) on a common dedicated physical resource. We consider a single antenna at the BS and at the UTs, as the main objective of this paper is to propose a novel OTFS based RA preamble waveform and TA estimation.
Having multiple antennas at the BS and at the UTs is likely to further improve the performance of the proposed OTFS based RA preamble waveform and TA estimation, but this will not change
our main conclusion that the proposed OTFS based RA preamble waveform and TA estimation is robust to channel induced Doppler shift in high mobility scenarios.

Let the number of UTs requesting simultaneous random access be denoted by $Q$. The random access (RA) preamble transmitted by the $q$-th UT is denoted by $s_q(t), q=0,1,\cdots, (Q-1)$. The RA preamble is chosen randomly by the UT from a set of permissible RA preambles. With mobile UTs, the channel from each UT to the BS is time-varying. We therefore use a delay-Doppler (DD) representation of the channel from each UT to the BS. Let the DD channel for the $q$-th UT be given by \cite{chocks,  emanuele, EmanueleTWC}
\begin{eqnarray} 
\label{equation1}
h_{q}(\tau, \nu)=\sum_{i=1}^{L_{q}} h_{q, i} \, \delta(\tau- \tau_{q, i}) \, \delta\left(\nu-\nu_{q, i}\right)
\end{eqnarray}
where $\delta(.)$ is the impulse function, $h_{q,i},\tau_{q,i}$ and $\nu_{q,i}$ are the complex channel gain, the delay and Doppler along the $i$-th multipath and $L_q$ is the number of paths.
Also
{\vspace{-2mm}
\begin{eqnarray} 
\label{equation2}
\mathbb{E}\Big[\sum_{i=1}^{L_{q}}\left|h_{q, i}\right|^{2}\Big]= \beta_q,\,\,\, q=0,1, \cdots(Q-1)
\end{eqnarray}}
where $\beta_q$ models the path-loss between the $q$-th UT and the BS.
Further
{
\vspace{-1mm}
\begin{eqnarray}
\label{equation3} 
0 \leq \tau_{q, i} \leq \tau_{\max },\,\,\,\left|\nu_{q, i}\right| \leq \nu_{\max }
\end{eqnarray}
}
where $\tau_{max}$ is the maximum possible round-trip propagation delay between any UT in the cell and the BS, and $\nu_{max}$ is the maximum possible Doppler shift of any channel path between any UT in the cell and the BS.
Also, without loss of generality, let
{
\vspace{-2mm}
\begin{eqnarray}
\label{equation4}
0 \leq \tau_{q, 1} \leq \tau_{q, 2} \leq \ldots \leq \tau_{q, L_{q}}.
\end{eqnarray}
}
The time-domain signal received at the BS is given by

{\small
\vspace{-4mm}
\begin{eqnarray}
\label{equation5}
r ( t )  & =  & \sum _ { q = 0 } ^ { Q - 1 } \iint h _ { q } ( \tau , \nu ) s _ { q } ( t - \tau ) e ^ { j 2 \pi \nu ( t - \tau ) } d \nu d \tau + w ( t )  \nonumber \\
& \mya &    \sum _ { q = 0} ^ { Q-1 } \sum _ { i = 1 } ^ { L _ { q } } h _ { q , i }  \, s _ { q } \left( t - \tau _ { q , i } \right) e ^ { j 2 \pi \nu _ { q , i } \left( t - \tau _ { q , i } \right) } + w ( t )
\end{eqnarray}
\normalsize}
where step (a) follows from (\ref{equation1}) and $w(t)$ is the AWGN at the BS modeled as a white Gaussian random process having zero mean and power spectral density $N_o$. 

OTFS modulation has been recently proposed as an alternative to OFDM for data communication, which has been shown to exhibit better robustness towards Doppler shifts in high mobility scenarios when compared to OFDM \cite{OTFSOFDM, Hadaniwhitepaper}.
In OTFS modulation, the information symbols are sent in the delay-Doppler (DD) domain which is $T$ sec wide along the delay domain and $\Delta f = 1/T$ Hz wide along the Doppler domain. The delay domain is divided into $M$ sub-divisions, with each sub-division being $T/M$ sec wide. The Doppler domain is divided into $N$ sub-divisions, with each sub-division being $\Delta f/N$ Hz wide. Therefore, the DD domain is divided into $M \times N$ delay-Doppler Resource Elements (DDREs), i.e., each DDRE is $T/M $ sec along the delay domain and $ \Delta f/N$ Hz along the Doppler domain. Let ${\small x_q[k,l]}$ denote an information symbol transmitted by the $q$-th UT on the $(k,l)$-th DDRE. The $(k,l)$-th DDRE comprises of the interval $[ lT/M , (l+1)T/M)$ sec along the delay domain and the interval $[k\Delta f/N , (k+1)\Delta f/N)$ Hz along the Doppler domain. Using the Inverse Symplectic Finite Fourier Transform (Inverse SFFT), the DD information symbols are firstly transformed to the Time-Frequency (TF) domain ($NT$ sec $\times$ $M\Delta f$ Hz) \cite{OTFSOFDM, Hadaniwhitepaper}.

The frequency domain is divided into $M$ sub-divisions and the time domain is divided into $N$ sub-divisions, i.e., the entire TF domain is subdivided into $MN$ sub-divisions, where each sub-division of the TF domain is $\Delta f$ Hz wide along the frequency domain and $T$ sec wide along the time domain. Each such sub-division of the TF domain is referred to as a Time Frequency Resource Element (TFRE). The $(m,n)$-th TFRE comprises of the interval $[m\Delta f , (m+1)\Delta f)$ along the frequency domain and the interval $[nT , (n+1)T)$ along the time domain. The modulated TF symbol transmitted by the $q$-th UT on the $(m,n)$-th TFRE is given by
{
\vspace{-1mm}
\begin{eqnarray}
\label{equation17}
X_q[n,m]    =    \frac{1}{MN} \sum_{k=0}^{N-1}\sum_{l=0}^{M-1}  \hspace{-1mm} x_q[k,l] \, e^{-j 2 \pi {\big (}   \frac{ml}{M}  - \frac{n k}{N} {\big )} }  , \nonumber \\
  m=0,\cdots, (M-1) , n=0,\cdots,(N-1).
\end{eqnarray}
}
These are then converted to time domain and transmitted \cite{OTFSOFDM, Hadaniwhitepaper}, i.e.
{
	\begin{eqnarray}
	\label{equation18}
	{s}_q(t) & \hspace{-2mm} = &  \hspace{-2mm} \sum_{m=0}^{M-1}\sum_{n=0}^{N-1} X_q[n,m] g_{tx}(t - nT) e^{j 2 \pi m \Delta f (t - nT)}
	\end{eqnarray}    
}
where $g_{tx}(t)$ is the rectangular transmit pulse given by
\begin{eqnarray}
\label{gtxpulse_eqn}
g_{tx}(t) \Define \left\{
  \begin{array}{@{}ll@{}}
   \frac{1}{\sqrt{T}} \, &, \, \mbox{if} \,\,\, 0 \leq t < T \\
   0  \, & \,,\, \mbox{otherwise}.
  \end{array}\right.
\end{eqnarray}
The total time duration and bandwidth of ${s}_q(t)$ is $NT$ seconds and $M \Delta f$ Hz, respectively \cite{OTFSOFDM, Hadaniwhitepaper}. 
At the BS, the received time-domain signal $ r(t) $ is transformed to the TF domain, i.e.
	
{\vspace{-4mm}
\small
\begin{eqnarray}
\label{equation20}
Y[n,m]  = W_{rx}[n,m] \int g_{rx}^{*}(t - nT) r(t) e^{-j 2 \pi m \Delta f (t - nT)} dt
\end{eqnarray}
\normalsize
}
where $g_{rx}(\cdot)$ is the receive pulse, which is taken to be the same as the transmit pulse $g_{tx}(\cdot)$ and $W_{rx}[n,m]$ is a suitable receive windowing sequence\footnote{\footnotesize{In Section \ref{sec3} we show that time-domain receiver windowing
can reduce the energy in the side lobes of the received RA preamble along the Doppler domain, at the cost of a wider main lobe.
Reduction in side-lobe energy level implies reduction in interference between adjacent RA preambles, whereas a wider main lobe implies a larger required width $N_1$
of each RA preamble (see Fig. \ref{figrarq}). We consider windowing waveforms which are commonly used in spectral analysis,
to achieve a trade-off between lower side-lobe energy level and a wider main lobe (e.g.,
Hamming window, Blackman-Harris window).}} in the TF domain \cite{OTFSOFDM}.
This received TF domain signal is then transformed back to the DD domain through SFFT, i.e.

{\vspace{-4mm}
\small
\begin{eqnarray}
\label{equation21}
{\widehat x}[k,l] & = &  \sum_{n=0}^{N-1} \sum_{m=0}^{M-1}  Y[n,m] \, e^{j 2 \pi {\Big (}   \frac{ml}{M}  - \frac{n k}{N} {\Big )}}.
\end{eqnarray}
\normalsize
}
In order to avoid multiuser interference (MUI), $(T, \Delta f = 1/T)$ must satisfy \cite{OTFSOFDM, Hadaniwhitepaper, chocks, EmanueleTWC},  
{
\begin{eqnarray}
\label{equation22}
\tau_{max} < T \,,\,  2\nu_{max} < \Delta f.
\end{eqnarray}
}
{
\begin{figure}
\centering
	\begin{subfigure}[b]{0.8 \linewidth}
		\includegraphics[width=4.5 in, height = 3.5 in]{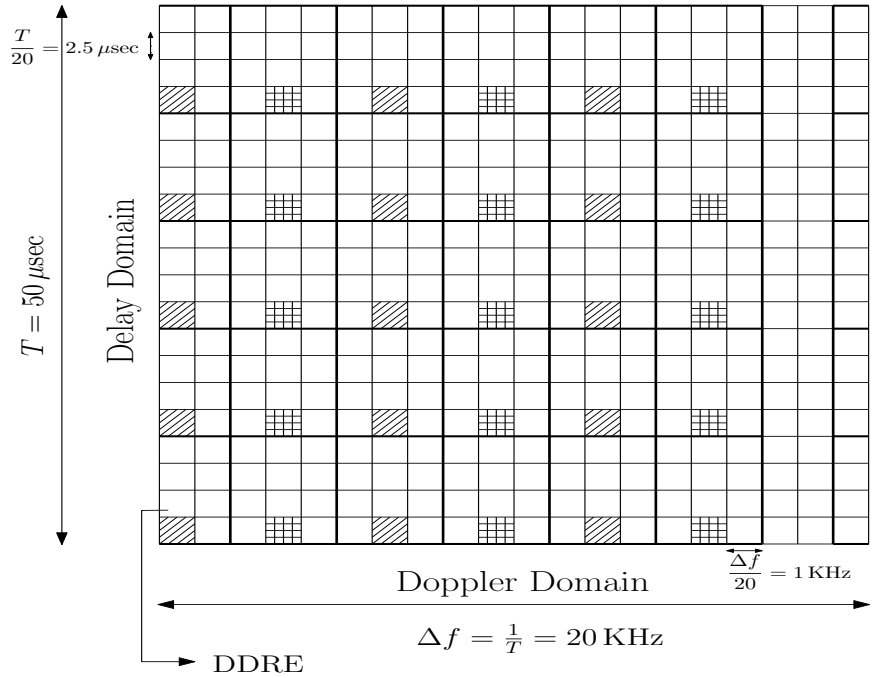}
		\caption{\centering \small{Information Tx. by two UTs in the DD domain.}}
		\label{fig1sub1}
	\end{subfigure} 
	
	\vspace{4mm}
	
	\begin{subfigure}[b]{0.8\linewidth}
		\includegraphics[width=4.5 in, height = 3.5 in]{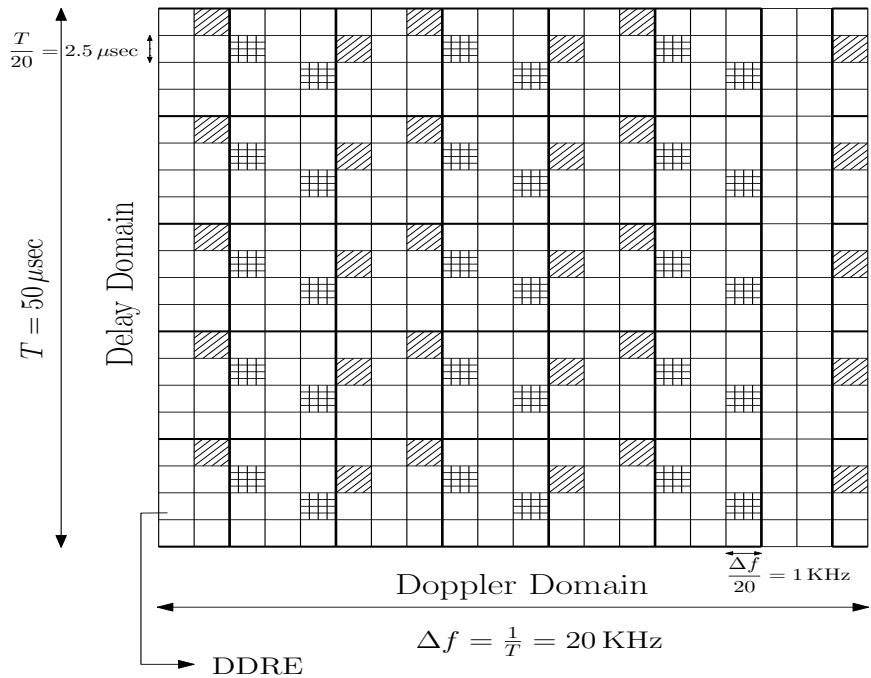}
		\caption{\centering  {\small Information Rx. at the BS in the DD domain.}}
		\label{fig1sub2}
	\end{subfigure}
	\caption{{\small Two user uplink communication in the absence of uplink timing synchronization. Information symbols are transmitted on only $30$ out of the total $M \times N = 400$ DDREs.  ($M = N = 20$, $Q = 2, T = 50 \mu s$, $\Delta f = 20$ KHz).}}
	\vspace{-4mm}
	\label{fig1}
\end{figure}
}
In comparison to OFDM, the main advantage of OTFS is that while Doppler spread in high mobility scenarios leads to inter-carrier interference (ICI) and therefore severe performance degradation in OFDM, in OTFS there is almost no performance degradation, as with OTFS modulation the effective DD domain channel simply shifts the transmitted information symbols $ \left( x _ { q } [ k , l ] \right) $ from one DDRE to another while preserving information \cite{OTFSOFDM, Hadaniwhitepaper, chocks, EmanueleTWC}. For example, in Fig.~\ref{fig1}, the top sub-figure illustrates the information symbols transmitted from two UTs (UT1 and UT2) in the DD domain $ ( \Delta f = 20 \mathrm { KHz } , T = 1 / \Delta f = 50 \mu s , M = N = 20) $ on shaded DDREs (slanted lines for UT1 and horizontal-vertical line grid for UT2).
The bottom sub-figure illustrates the DDREs on which information symbols have been received from each UT. We note that the information symbol transmitted from UT1 on the $(k, l)$-th DDRE is received at the BS on the $ \left( ( k - 1 ) _ { (N) } , ( l + 2 ) _ { (M) } \right) $-th DDRE\footnote{\footnotesize{For integer $x$ and positive integer $N$, $x_{(N)}$ denotes the non-negative integer less than $N$ and congruent to $x$ modulo $N$.}} due to a propagation path having a Doppler shift of $ - 1 $ KHz (which corresponds to a left circular shift by one DDRE along the Doppler domain as each DDRE is $\Delta f/N = 1$ KHz wide along the Doppler domain) and a delay of 5 $\mu s$ (which corresponds to a circular shift by two DDREs along the delay domain as each DDRE is $T/M = 2.5 \mu s$ wide along the delay domain).\footnote{\footnotesize{In this paper we consider the delay and Doppler shift of the channel paths to be general non-integer multiples of the delay domain resolution $T/M$ and Doppler domain resolution $\Delta f/N$ respectively. The delay and Doppler shifts of the channel paths for the example considered in Figs.~\ref{fig1} and \ref{fig2} are taken to be integer multiples of the delay and Doppler domain resolution respectively, only for the sake of illustrating the need for uplink timing synchronization.}}
{
\begin{figure}
\centering
	\begin{subfigure}[b]{0.8 \linewidth}
		\includegraphics[width= 4.5 in, height = 3.5 in]{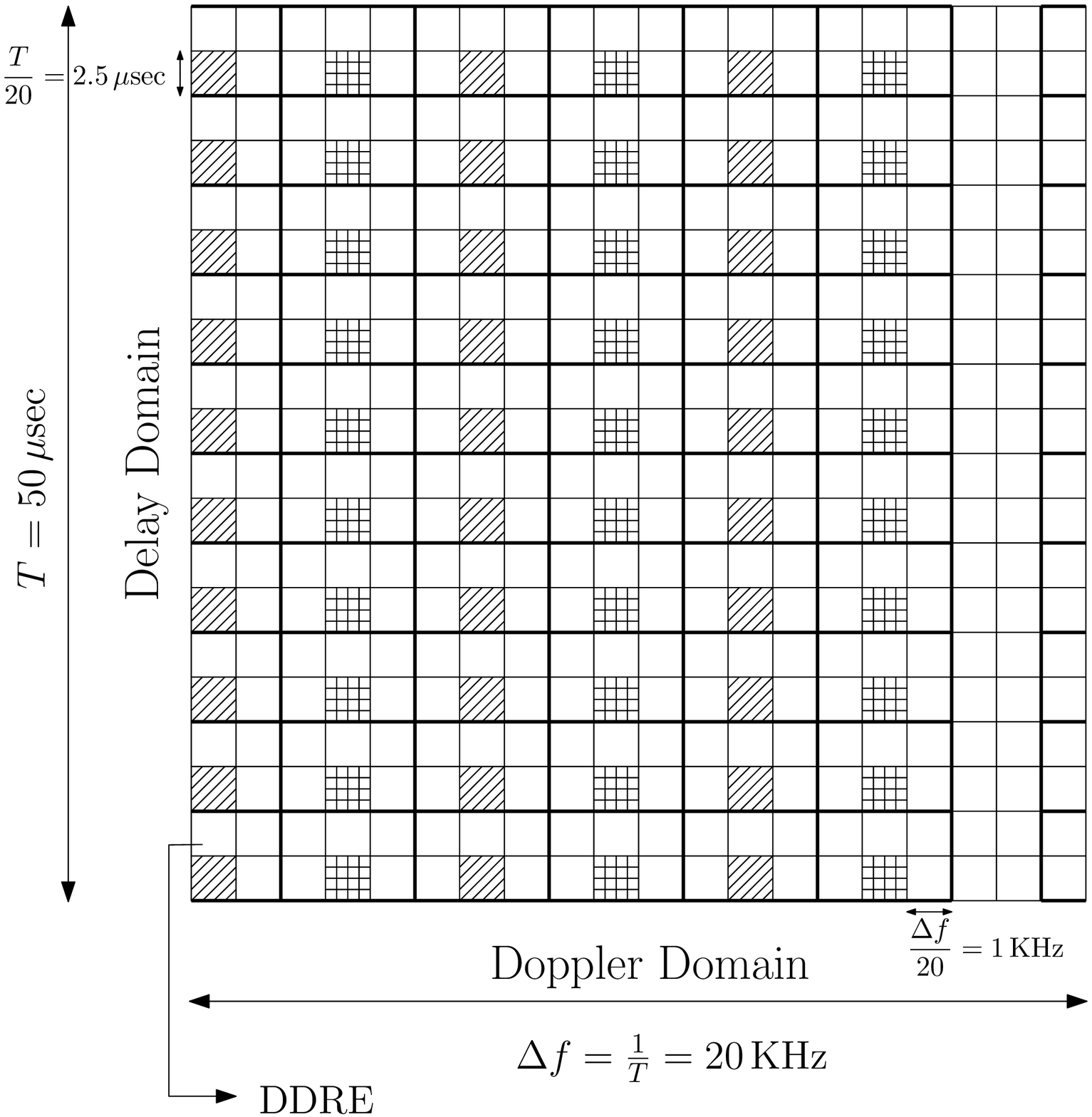}
		\caption{\centering \small{Information Tx. by two UTs in the DD domain.}}
		\label{fig2sub1}
	\end{subfigure}
	
	\vspace{4mm} 
	
	\begin{subfigure}[b]{0.8\linewidth}
		\includegraphics[width= 4.5 in, height = 3.5 in]{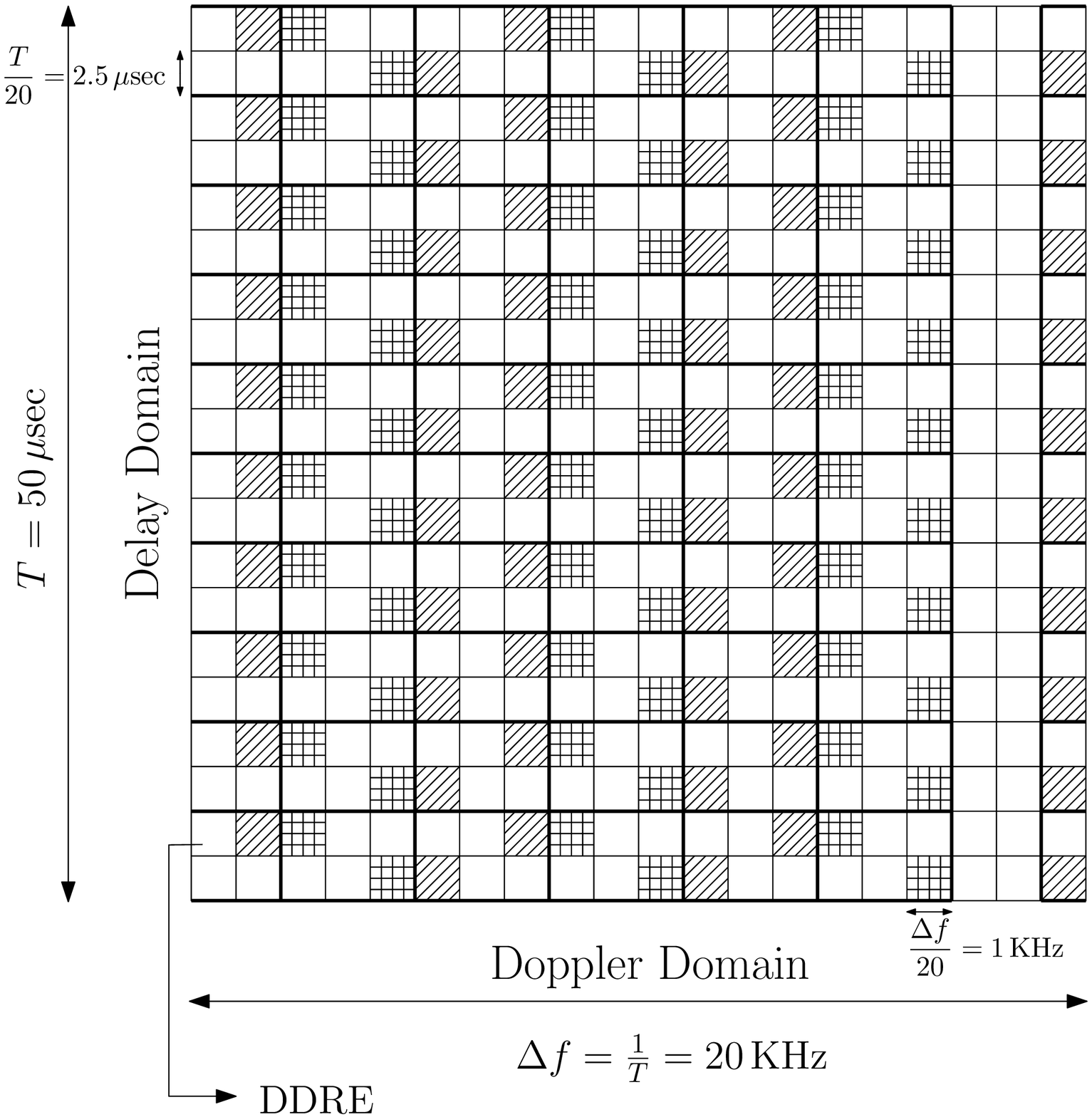}
		\caption{\centering  {\small Information Rx. at the BS in the DD domain.}}
		\label{fig2sub2}
	\end{subfigure}
	\caption{{\small With uplink timing synchronization, information symbols are transmitted on $60$ out of the total $M \times N = 400$ DDREs (see sub-figure (a)), which is twice that in Fig.~\ref{fig1sub1} in the absence of timing synchronization.}}
	\label{fig2}
	\vspace{-4mm}
\end{figure}
}

The channel induced shift in the DD domain could lead to multi-user interference (MUI) unless the UTs are properly allocated DDRE’s in such a manner that the signal received from different UTs do not overlap in the DD domain.
In the absence of uplink timing synchronization at the BS, the delay of some path from a cell-edge UT to the BS could be as high as the maximum possible round-trip propagation delay between the BS and any UT in the cell. Therefore, in order to avoid MUI, the DDRE’s allocated to different UTs must be spaced apart in the delay domain by at least the maximum round-trip propagation delay between any UT in the cell and the BS. For example, in Fig.~\ref{fig1} the channel for each UT consists of two major paths having different delays and Doppler shifts. The delay and Doppler shift profile for the two paths from UT1 to the BS is $ \{ ( 5.0 \mu s , - 1.0 \,\mathrm { KHz } ) , ( 7.5 \mu s , 1.0\, \mathrm { KHz } ) \} $ while that for UT2 is $ \{ ( 2.5 \mu s , 1.0 \,\mathrm { KHz } ) , ( 5.0 \mu s , - 1.0\, \mathrm { KHz } ) \} $. The cell has a maximum round-trip propagation delay of $ 7.5 \mu s$ which corresponds to a shift of $ 7.5 \mu s/2.5 \mu s = 3$ DDREs along the delay domain and the maximum Doppler shift magnitude of $\nu_{max} = 1 $ KHz corresponds
to a maximum shift of one DDRE (in both directions) along the Doppler domain. Therefore, each UT is allocated a contiguous rectangular block of $12$ DDREs out of which only $1$ DDRE is used to carry an information symbol and the rest $11$ DDREs serve as \textit{guard} DDREs in order to avoid MUI (in Fig.~\ref{fig1}, the rectangular blocks are demarcated by solid lines).

This method of using guard DDREs which do not carry any information has been proposed
in \cite{patent2}. However, from the example in Fig.~\ref{fig1} it is clear that in the absence of uplink timing synchronization at the BS, the guard band needed would be much larger than the actual delay spread of the channel, which would result in severe underutilization of resources.\footnote{\footnotesize{In the example, the guard band width in the delay domain is roughly equal to the maximum possible round-trip propagation delay of the cell (i.e., $7.5 \mu s$), whereas the channel delay spread for each UT is only $2.5 \mu s$.}} For example, in Fig.~\ref{fig1} only $30$ out of the $M \times N = 400$ DDREs are used for carrying information. 

On the other hand if the timing of the UTs is synchronized in the uplink, the maximum possible effective delay for any UT in the cell would be reduced to the maximum possible delay spread, which is generally much smaller than the maximum round-trip propagation delay for any UT in the cell. Due to this, the required size of the guard band would also decrease. For the same example in Fig.~\ref{fig1}, in Fig.~\ref{fig2} we show the transmitted and the received DD symbols after uplink timing synchronization. It is clear that the guard band size has now reduced from $12$ DDREs in the unsynchronized scenario of Fig.~\ref{fig1} to only $6$ DDREs in Fig.~\ref{fig2} (i.e., after uplink timing synchronization). This results in a doubling of the total system throughput as the number of information carrying DDRE’s is now increased from $30$ in Fig.~\ref{fig1} to $60$ in Fig.~\ref{fig2}.

From the above discussion it is clear that, uplink timing synchronization is a necessity for OTFS based systems also. Accurate estimation of the round-trip propagation delay (called as ``timing advance'') between a UT and the BS is therefore needed.\footnote{\footnotesize{In this paper, the round-trip propagation delay between a UT and the BS is defined to be the greatest path delay between that UT and the BS, i.e., $\tau_{q,L_q}$ for the $q$-th UT (see (\ref{equation4})).}} The central theme of this paper is that accurate estimation of the timing advance can be achieved by using OTFS modulation for the transmission of the RA preamble.  

This is because, OTFS modulation transmits symbols in the delay-Doppler (DD) domain, and the multi-path propagation simply shifts the symbols from the DDRE in which they are transmitted to other DDREs in the DD domain. Further, the amount of shift induced by a channel path along the delay domain depends only on the delay of that path and appears to be insensitive to the Doppler shift of that path. In the next section, we study this in detail and exploit it to propose OTFS based RA preamble waveforms to accurately estimate the timing advance (TA) of each UT.

\begin{figure}[h]
	\hspace{-0.2 in}
	\centering
	\includegraphics[width= 4.5 in, height= 2.0 in]{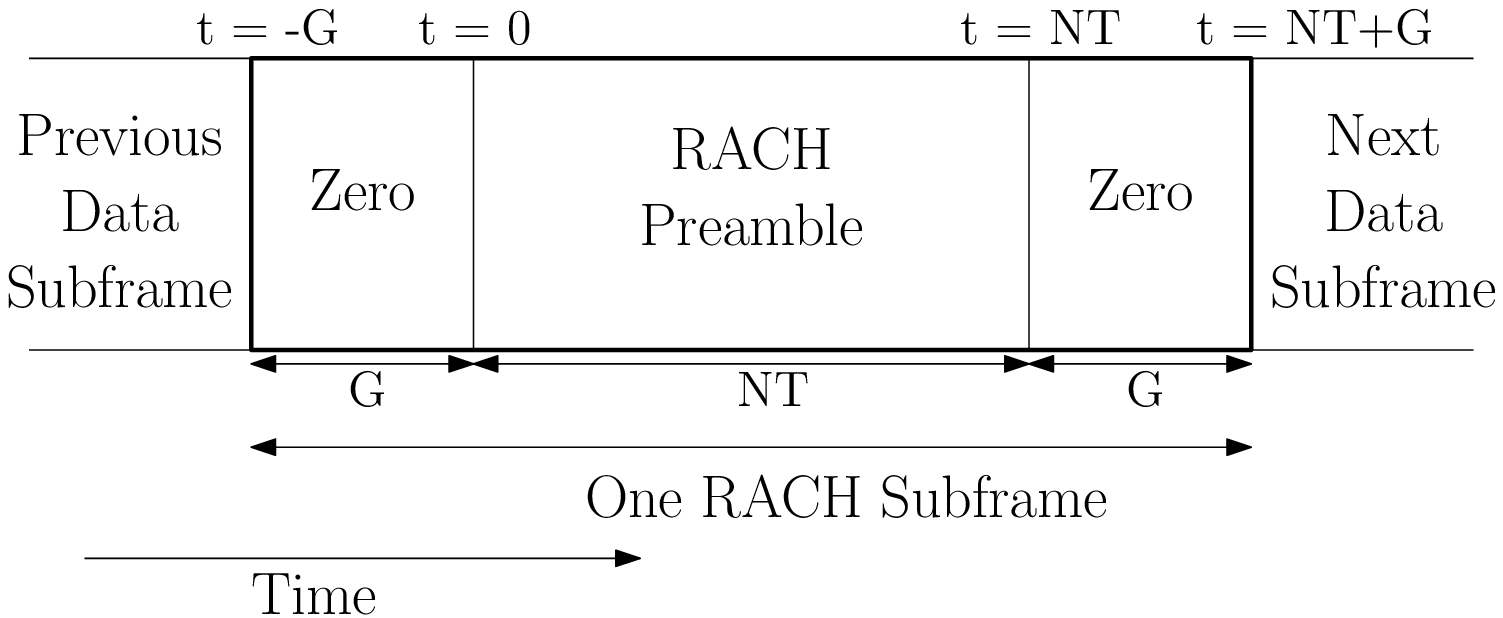}
	\vspace{-0.1 cm}
	\caption{Proposed Random Access Channel (RACH) subframe based on OTFS modulation.} 
	\vspace{-0.2cm}
	\label{fig3}
	\vspace{-0.2cm}
\end{figure}
\section{OTFS Based RA Preamble Transmission and Timing Advance (TA) Estimation}
\label{sec3}
In this section we propose the use of OTFS modulation for random access (RA) preamble transmission.
We also propose a method for RA preamble detection and TA estimation at the BS, based on the RA preambles received at the BS  in the delay-Doppler (DD) domain.
This is different from the RA waveforms and RA preamble detection/TA estimation used in 4G systems.
TA estimation in 4G systems has been explained in \cite{LTE4GTA}.
A UT randomly picks a RA preamble from the set of permissible/allowed RA preambles and transmits it.  
For each permissible RA preamble, the BS detects if that preamble has been received, and if so, it then
estimates the round-trip propagation delay between the UT which
transmitted that RA preamble and the BS. This estimate is then fed back by the BS to that UT. The UT then uses this estimate to $\textit{advance}$ its uplink timing. As each UT advances its uplink timing by its own round-trip propagation delay, it is ensured that uplink signals from all UTs would be received at the BS in a time-synchronized manner.  

As the UTs are not time synchronized during RA preamble transmission, the RA preambles transmitted from different UTs arrive at the BS in an unsynchronized manner. To remove the uncertainity of different arrival time from different unsynchronized UTs, a guard block of $G$ sec is prefixed to the RA preamble as shown in Fig.~\ref{fig3}, where $G$ sec is the maximum possible round-trip propagation delay between the BS and any UT in the cell.\footnote{\footnotesize{For the single-cell scenario considered in this paper, there might be situations where there are significant sources of reflection from objects which are outside the cell.
In such situations, the maximum possible round-trip propagation delay between the BS and any UT in the cell (i.e., $G$) must also include the effect of propagation paths due to reflection from such objects.}} Also, a guard block of $G$ seconds is suffixed after the transmission of $s_q(t)$ in order that the RA preamble transmission does not interfere with subsequent frames (see Fig.~\ref{fig3}). Even in 4G systems, a guard block of duration equal to the maximum possible round-trip propagation delay is prefixed and suffixed to the RA preamble \cite{4GLTE}. Let $T_c >  G$ be the time allocated for the RA
preamble (excluding the prefix and suffix guard blocks of $G$ sec each) and $B_c$ denote the total bandwidth allocated for RA preamble transmission.
Let the entire DD-domain consisting of the $MN$ DDREs be denoted by

{\vspace{-3mm}
\small 
\begin{eqnarray}
{\mathcal D}  \Define  \{ (k,l) \, | \, k = 0,1,\cdots, (N-1) \,,\, l = 0,1, \cdots, (M-1) \}. 
\end{eqnarray}
\normalsize}
Note that the Doppler domain and delay domain indices of the $(k,l)$-th DDRE i.e., $k$ and $l$ respectively, are both non-negative integers.
In the proposed OTFS based RA preamble waveform, the $q$-th RA preamble is allocated a subset ${\mathcal R}_q$ of ${\mathcal D}$. Further,
these subsets (referred to as DDRE groups) form a partition of the entire DD-domain i.e.,
\begin{eqnarray}
\cup_{i=0}^{R-1} {\mathcal R}_i \, = \, {\mathcal D}  \,\,\,,\,\,\,   {\mathcal R}_i \cap {\mathcal R}_k \, = \, \emptyset  \,\,\, (i \ne k),  \nonumber \\
 i,k=0,1,\cdots, (R-1) 
\end{eqnarray}
where $R$ is the total number of allowed RA preambles.
Further, the $q$-th RA preamble waveform consists of a non-zero symbol on a single DDRE location $(k_q,l_q) \in {\mathcal R}_q$ and zeros on all other DDREs, i.e.,
the DD-domain transmit signal for the $q$-th RA preamble is given by
\begin{eqnarray}
\label{equation31}
x _ { q } [ k,l ] = \begin{cases} 
{ \sqrt { M N E}  } & { , \mbox{if}  \,\,\, \, (k,l) = (k_q, l _ { q }) } \\
 { 0 } & {  ,  \mbox{if}  \,\,\,\, \, (k,l) \ne  (k_q,  l _ { q })  } 
 \end{cases}
\end{eqnarray}
where $E$ denotes the energy of the transmitted RA preamble.
As a UT can only transmit one RA preamble at a time, from (\ref{equation17}), (\ref{equation18}) and (\ref{gtxpulse_eqn}) it follows that the time-domain transmit signal for the $q$-th RA preamble is given by (\ref{sqprmb}) (see top of this page).
\begin{figure*}
{\vspace{-7mm}
\small
\begin{eqnarray}
\label{sqprmb}
s_q(t)  =   \begin{cases}
0  & , \mbox{if}  \,\,\,\,  -G \leq t < 0  \\
\sqrt{\frac{E}{MN}}  \, \sum_{m=0}^{M-1} \sum_{n=0}^{N-1} e^{j2 \pi m \Delta f t} e^{\frac{j 2 \pi n k_q}{N}} e^{\frac{-j 2 \pi m l_q}{M}}  g_{tx}(t - nT) & ,  \mbox{if}  \,\,\,\, 0 \leq t < NT \\
0 & ,  \mbox{if}  \,\,\,\, NT \leq t < NT+G
\end{cases}. \\
\hline \nonumber
\end{eqnarray}
\vspace{-5mm}
\normalsize}
\end{figure*}
As a rectangular transmit pulse is used (see (\ref{gtxpulse_eqn})), from (\ref{sqprmb}) it follows that $\int_{0}^{NT}  \vert s_q(t) \vert^2 dt = E$, i.e., the transmitted RA preamble energy is $E$. With a rectangular transmit pulse, the transmitted time-domain signal for $t \in [nT  \,,\,  (n+1)T)$ is

{\vspace{-3mm}
\small
\begin{eqnarray}
\label{sqprmb1}
s_q(t)    =    \sqrt{\frac{E}{MN T}}  \, e^{j 2 \pi n k_q/N}  \sum_{m=0}^{M-1}e^{j2 \pi m \Delta f t} e^{-j 2 \pi m l_q/M}.
\end{eqnarray} 
\normalsize}
From (\ref{sqprmb}), it follows that during the time interval $[0 \,,\, NT)$ the average radiated power is $E/NT$, and therefore the peak-to-average-power-ratio (PAPR) is given by

{\vspace{-4mm}
\small
\begin{eqnarray}
\label{eqnpapr1}
\mbox{\small{PAPR}} & \hspace{-3mm} = &  \hspace{-3mm} \max_{t \in [0 \,,\, NT)} \frac{\vert s_q(t) \vert^2}{E/NT} \nonumber \\
 & \hspace{-20mm} \mya & \hspace{-14mm} \max_{n = 0,1,\cdots, N-1} \left[ \max_{t \in [nT \,,\, (n+1)T)}  \hspace{-3mm} \frac{\vert   \sum_{m=0}^{M-1}e^{j2 \pi m \Delta f t} e^{-j 2 \pi m l_q/M}  \vert^2}{M} \right] \nonumber \\
 & \hspace{-20mm} = & \hspace{-14mm} \max_{n = 0,1,\cdots, N-1} \left[ \max_{t \in [0 \,,\, T)}  \hspace{-1mm} \frac{\vert   \sum_{m=0}^{M-1}e^{j2 \pi m \Delta f (t+nT)} e^{-j 2 \pi m l_q/M}  \vert^2}{M} \right] \nonumber \\
 & \hspace{-20mm} \myb & \hspace{-14mm} \max_{n = 0,1,\cdots, N-1} \left[ \max_{t \in [0 \,,\, T)}  \hspace{-1mm}   \frac{\vert   \sum_{m=0}^{M-1}e^{j2 \pi m \Delta f t} e^{-j 2 \pi m l_q/M}  \vert^2}{M}   \right] \nonumber \\
  & \hspace{-20mm} \myc & \hspace{-11mm} \max_{t \in [0 \,,\, T)}  \hspace{-1mm}  \frac{\vert   \sum_{m=0}^{M-1}e^{j2 \pi m (t \Delta f -   l_q/M )} \vert^2}{M}  \, = \, M
\end{eqnarray}
\normalsize}
where step (a) follows by substituting the expression for $s_q(t)$ from (\ref{sqprmb1}) into the first step above.
Step (b) follows from the fact that $T \Delta f  = 1$ due to which $e^{j 2 \pi m \Delta f n T} = e^{j 2 \pi m n} = 1$.  
Using the identity, $\left\vert \sum\limits_{m=0}^{M-1} a_m \right\vert^2 \leq M \sum\limits_{m=0}^{M-1} \vert a_m \vert^2, a_m \in {\mathbb C}$,
with $a_m = e^{j 2 \pi m \left( t \Delta f  - l_q/M \right)}, m=0,1,\cdots,M-1,$ we get $ \frac{\vert   \sum_{m=0}^{M-1}e^{j2 \pi m (t \Delta f -  l_q/M )} \vert^2}{M}  \leq M$.
Since $l_q \in \{0 , 1, \cdots, M-1 \}$, $l_q T/M \in [0 \,,\, T)$, the maximum value in step (c) is achieved at $t = l_q T/M$.   
Therefore, the PAPR for each RA preamble is $M$ (i.e., it does not depend on $N$).
Clearly, $M$ should be as small as possible in order to limit PAPR.
As we shall see later, the proposed timing advance estimate is based on the delay domain index of the DDRE where the maximum RA preamble energy is received. Since the delay domain is $T$ sec wide and is divided/discretized into $M$ equal sub-divisions with each sub-division being $T/M$ sec wide, the timing advance estimate for any received RA preamble
can only take the values $0, \frac{T}{M}, \frac{2T}{M}, \cdots, \frac{(M-1)T}{M}$. As the maximum possible value of the timing advance estimate is $(M-1)T/M$, the timing advance estimation error can be very high for any UT for which the round-trip propagation delay between the BS and that UT exceeds $(M-1)T/M$. 
Therefore we should choose $(T,M)$ in such a way that even the maximum possible round-trip propagation delay $G$ can be measured accurately, i.e.
\begin{eqnarray}
\label{gtmeqn}
\frac{(M-1) T}{M} & > & G.
\end{eqnarray}
From (\ref{equation18}) it is clear that the transmitted time-domain signal $s_q(t)$ corresponding to the transmit DD domain signal $x_q[k,l]$ has a bandwidth of $M \Delta f = M/T$ Hz. Since the bandwidth constraint on the transmitted RA preamble waveforms is $B_c$, it follows that $(M,T)$ must also satisfy
\begin{eqnarray}
\label{bndcnstr}
{M}/{T} & \leq & B_c.
\end{eqnarray}
As the TA estimation accuracy is limited to half the width $T/M$ of each DDRE along the delay domain, from (\ref{bndcnstr}) it follows that the TA estimation accuracy is limited
to $1/(2B_c)$. To achieve the best possible accuracy for a given $(B_c,T_c)$ while ensuring the constraints in (\ref{gtmeqn}) and (\ref{bndcnstr}) and also that the PAPR of the transmit signal is as small as possible, we choose $(M,T)$ to be
\begin{eqnarray}
\label{mtvaleqn}
M & = &  1 + \lceil G B_c \rceil \,\,,\,\, T   =  (1 + \lceil G B_c \rceil)/B_c.
\end{eqnarray}        
Note that choosing an $M$ larger than $1 + \lceil G B_c \rceil $ will not improve the TA estimation accuracy beyond $1/(2B_c)$ but will increase the PAPR (see (\ref{eqnpapr1})).\footnote{\footnotesize{
Since the maximum possible round-trip propagation delay is $G$, multiple RA preambles can be multiplexed along the delay domain
only if the delay domain width $T$ is much larger than $G$ (i.e., $T \gg G$). Since $M/T$ is the bandwidth of the RA preamble waveform, for a given bandwidth $B_c$, we have $M = T B_c \gg G B_c$, i.e., the number of delay domain sub-divisions $M$ will have to be much larger than the value in (\ref{mtvaleqn}), which will result in high PAPR.}}
Since the transmit pulse $g_{tx}(t)$ is time-limited to $T$ seconds, from (\ref{equation18}) it is clear that the transmitted time-domain signal $s_q(t)$ is limited to $NT$ seconds.
Since the total time constraint on the transmit signal is $T_c$, $N$ must satisfy

{\vspace{-5mm}
\small 
\begin{eqnarray}
\label{nvalbnd}
N T & \leq & T_c.
\end{eqnarray}
\normalsize}
Next, we propose the DDRE group ${\mathcal R}_q, q = 0,1,\cdots, (R-1)$ corresponding to the $q$-th RA preamble to be
\begin{eqnarray}
\label{ddredef}
{\mathcal R_q} & \Define & \{  (k,l)  \, \vert \, 0 \leq l \leq (M-1) \,,\, qN_1  \leq k  < (q+1) N_1 \} \nonumber \\
\end{eqnarray}
where $N_1$ is the width of each RA preamble DDRE group along the Doppler domain.
The number of allowed RA preambles is therefore given by
\begin{eqnarray}
\label{reqn}
R & = & \left\lfloor {N}/{N_1} \right\rfloor.
\end{eqnarray}
From the proposed definition of the DDRE groups for the RA preambles in (\ref{ddredef}), it is clear that the RA preambles
are allocated different non-overlapping groups along the Doppler domain. The energy of each transmitted RA preamble is localized to a particular DDRE
in its corresponding group, i.e., the energy of the $q$-th RA preamble is localized at the $(k_q,l_q)$-th DDRE ($(k_q,l_q) \in {\mathcal R}_q$).   
Due to channel induced Doppler shift, the Doppler domain index of the DDRE where energy of a RA preamble is received need not be same as
the Doppler domain index of the DDRE where the energy of the corresponding transmitted RA preamble is localized. If some energy of an RA preamble is received outside its
DDRE group, this can lead to interference between different RA preambles.     
In order to minimize this interference, the respective DDRE locations of energy localization for two adjacent RA preambles (e.g., $q$-th and $(q+1)$-th RA preambles) are separated by $N_1$ DDREs along the Doppler domain, i.e.
\begin{eqnarray}
(k_q , l_q) & \in & {\mathcal R}_q  \,\,,\,\,   k_0  = \left\lfloor N_1/2  \right\rfloor  \,,\, q=0,1,\cdots, (R-1) \nonumber \\
(k_{q+1} - k_{q}) & = & N_1 \,\,\,,\,\,\, q=0,1,\cdots, (R-2).
\end{eqnarray}
If the Doppler shifts are integer multiple of the Doppler domain resolution $\Delta f/N$, then having $N_1 = \frac{2 \nu_{max}}{\Delta f/N} + 1$ would ensure
zero interference between different RA preambles. In this paper, we consider Doppler shifts which are in general non-integer multiples of $\Delta f/N$, due to which it is not possible
to ensure zero interference. However, receiver windowing along the time-domain allows us to achieve a trade-off between $N_1$ and the interference between different received RA preambles. This trade-off is discussed in more detail in Section \ref{secmultiUT}.    
In the event that two or more UTs randomly transmit the same RA preamble (i.e., collision event), it is clear that even in the absence of additive noise the timing advance of only one UT (whose RA preamble has been received at the BS with the highest received power) is estimated properly. Hence, for a low average timing error probability (TEP), the number of RA preambles $R = \lfloor N/N_1  \rfloor$ must be as large as possible in order to minimize the probability of collisions. Hence, we must choose $N$ to be the largest integer which satisfies (\ref{nvalbnd}), i.e.

{\vspace{-3mm}
\small
\begin{eqnarray}
\label{Neqn1}
N  =  \left\lfloor \frac{T_c}{T} \right\rfloor  =  \left\lfloor \frac{B_c T_c}{1 + \lceil G B_c \rceil } \right\rfloor. 
\end{eqnarray}
\normalsize}
In Fig.~\ref{figrarq}, we illustrate the DDRE groups ${\mathcal R}_q, q =0,1,\cdots, (R-1)$ for
$B_c = 1.08$ MHz, $T_c = 1.6$ ms, $G = 15\mu s$, $\nu_{max} = 300$ Hz. Based on (\ref{mtvaleqn}) we have $M = 18$, $T = M/B_c = 16.66 \mu s$, $\Delta f = 1/T = 60$ KHz and $N = 96$ (see (\ref{Neqn1})).
With $N_1 = 5$ we then have $R = \lfloor N/N_1 \rfloor = 19$.
From (\ref{sqprmb}) it follows that the total transmitted energy of an RA preamble is given by  
\begin{eqnarray}
\label{equation33}
\int _ { 0 } ^ { NT } \left| s _ { q } ( t ) \right| ^ { 2 } d t = E \,\,\,,\,\, q=0,1,\cdots, (R-1).
\end{eqnarray}
Let $c_q \,,\, q=0,1,\cdots, (Q-1)$ be one when the $q$-th UT transmits a RA preamble and zero otherwise. Also, let $r_q, q \in \{0 , 1, \cdots, (Q-1)  \}$ denote the index of the RA preamble transmitted by the $q$-th UT.
From (\ref{equation5}) and (\ref{sqprmb}), the signal received at the BS is then given by (\ref{rxraprmb}) (see top of this page).
\begin{figure*}
{\vspace{-8mm}
\small
\begin{eqnarray}
\label{rxraprmb}
r(t) & = & \sum\limits_{q=0}^{Q-1} \sum\limits_{i=1}^{L_q} c_q h_{q,i} \, {s}_{r_q}(t - \tau_{q,i}) e^{j 2 \pi \nu_{q,i} (t - \tau_{q,i})}  \, +  w(t) \nonumber \\
& \hspace{-16mm} =  & \hspace{-9mm} \sqrt{\frac{E}{MN}} \sum_{q=0}^{Q-1} \sum_{i=1}^{L_q} \sum_{m'=0}^{M-1} \sum_{n'=0}^{N-1} c_q  h_{q,i}  e^{j 2 \pi \nu_{q,i} (t - \tau_{q,i})} e^{j 2 \pi m' \Delta f (t - \tau_{q,i})} e^{\frac{j 2 \pi n' k_{r_q}}{N}} e^{\frac{-j 2 \pi m' l_{r_q}}{M}} g_{tx}(t - n'T - \tau_{q,i})  \, +  \, w(t).
\end{eqnarray}
\normalsize}
\end{figure*}
From (\ref{equation2}) and (\ref{equation33}) it follows that the average transmit power of an RA preamble
at the BS is $E/(NT+ 2G)$. The noise power is $M \Delta f N_o$ Watt
where $N_o$ Watt/Hz is the power spectral density of the AWGN $w(t)$ in (\ref{rxraprmb}). Hence the transmit signal-to-noise-ratio of a RA preamble is

{\vspace{-3mm}
\small
\begin{eqnarray}
\label{equation34}
\rho \Define \frac { E / ( NT+ 2G ) } { M N _ { o }  \Delta f} = \frac { E } { N _ { o } } \frac { 1 } { M \left( N +  \frac{2GB_c}{1 + \lceil G B_c \rceil } \right) }.
\end{eqnarray}
\normalsize}
The BS ignores the first $G$ seconds of $r(t)$ due to the uncertainty in the arrival time of RA preambles from
the unsynchronized UTs. Next, using (\ref{equation20}), the received time-frequency (TF) signal at the BS is given by (\ref{Ynmeqn1}) (see top of this page, here $g_{rx}(t) = g_{tx}(t)$).
\begin{figure*}
{\vspace{-9mm}
\small
\begin{eqnarray}
\label{Ynmeqn1}
Y[n,m] & = & W_{rx}[n,m] \int\limits_{0}^{NT}  g_{rx}^*(t-nT) r(t) e^{-j 2 \pi m \Delta f t}  dt
\, = \, \frac{W_{rx}[n,m] }{\sqrt{T}} \int\limits_{nT}^{(n+1)T}  r(t) e^{-j 2 \pi m \Delta f t}  dt  \nonumber \\
&  \mya  &   \sqrt{\frac{E}{MN}}  W_{rx}[n,m]  \sum\limits_{q=0}^{Q-1} \sum\limits_{i=1}^{L_q} \sum_{m'=0}^{M-1} {\Bigg [} c_q \, h_{q,i}  e^{-j 2 \pi \nu_{q,i} \tau_{q,i}}  e^{-j 2 \pi m' \Delta f \tau_{q,i} }  e^{\frac{j 2 \pi n k_{r_q}}{N}} e^{\frac{-j 2 \pi m' l_{r_q}}{M}} \nonumber \\
&    &  \hspace{6mm} {\Bigg \{}  \frac{1}{T} \int_{nT + \tau_{q,i}}^{(n+1)T} e^{j 2 \pi (\nu_{q,i} + (m' - m) \Delta f)t}  dt  \, + \,  \frac{e^{\frac{- j 2 \pi k_{r_q}}{N}}}{T}  \int_{nT}^{nT + \tau_{q,i}} e^{j 2 \pi (\nu_{q,i} + (m' - m) \Delta f)t}  dt   {\Bigg \}} \, {\Bigg ]}  \, + \, W[n,m], \nonumber \\
W[n,m]  & \Define &  \frac{W_{rx}[n,m] }{\sqrt{T}} \int\limits_{nT}^{(n+1)T} w(t) e^{-j 2 \pi m \Delta f t} dt. 
\end{eqnarray}
\vspace{-4mm}
\normalsize}
\end{figure*}
In (\ref{Ynmeqn1}), step (a) follows from the fact that $g_{tx}(\cdot)$ is rectangular (see ({\ref{gtxpulse_eqn}})). Further, from (\ref{Ynmeqn1})
it also follows that $W[n,m] \sim  {\mathcal C}{\mathcal N}(0, \vert W_{rx}[n,m] \vert^2 N_o)$. Since the proposed DD domain RA preambles interfere only along the Doppler domain, receiver windowing along the time-domain is sufficient to reduce leakage of RA preamble energy from one RA preamble to an adjacent RA preamble along the Doppler domain. Therefore, in this paper we replace $W_{rx}[n,m]$ by $W_{rx}[n]$. The TF signal $Y[n,m]$ is then transformed back to the DD domain using the SFFT transform as in (\ref{equation21}). The resulting DD signal ${\widehat x}[k,l] $ is given by (\ref{xkleqn}) (see end of this page).
\begin{figure*}
{\vspace{-6mm}
\small
\begin{eqnarray}
\label{xkleqn}
{\widehat x}[k,l]      &    = &   \sum_{m=0}^{M-1} \sum_{n=0}^{N-1}   Y[n,m] \, e^{j 2 \pi    (\frac{ml}{M} - \frac{nk}{N})  }   \,\,,\,\, k =0,1,\cdots, (N-1), l=0,1,\cdots, (M-1) \nonumber \\
&   \mya &   \sqrt{E M N} \sum\limits_{q=0}^{Q-1} \sum\limits_{i=1}^{L_q} {\Bigg \{} c_q h_{q,i} e^{-j 2 \pi \nu_{q,i} \tau_{q,i} } U_{q,i}[k]  V_{q,i}[l]    {\Bigg \}}  \, + \, w[k,l] \,,\,
w[k,l]  \Define   \sum_{m=0}^{M-1} \sum_{n=0}^{N-1}   W[n,m] \, e^{j 2 \pi    (\frac{ml}{M} - \frac{nk}{N})  } 
\end{eqnarray}
\normalsize}
\end{figure*}
\begin{figure*}
{\vspace{-4mm}
\small
\begin{eqnarray}
\label{xkleqndef}
U_{q,i}[k] & \Define &  \frac{1}{N}  \sum\limits_{n=0}^{N-1} W_{rx}[n] e^{j 2 \pi n \left( \frac{(k_{r_q} - k )}{N} + \frac{\nu_{q,i}}{\Delta f}  \right)}  \nonumber \\
V_{q,i}[l]  & \Define & e^{j \pi (M-1) \left( \frac{l - l_{r_q}}{M}  -  \frac{\tau_{q,i}}{T}\right)} 
 {\Bigg \{}  \int\limits_{\frac{\tau_{q,i}}{T}}^{1} e^{j 2 \pi \frac{\nu_{q,i}}{\Delta f} t} \, \frac{M \mbox{Sinc}\left(M \left( t - \frac{l_{r_q}}{M} - \frac{\tau_{q,i}}{T} \right) \right)}{\mbox{Sinc}\left( t - \frac{l_{r_q}}{M} - \frac{\tau_{q,i}}{T} \right)}  \, \frac{ \mbox{Sinc}\left(M \left( t - \frac{l}{M} \right) \right)}{\mbox{Sinc}\left( t - \frac{l}{M} \right)}   \, dt    \nonumber \\
& &  \hspace{33mm} + \, e^{\frac{- j 2 \pi k_{r_q} }{N}}  \int\limits_{0}^{\frac{\tau_{q,i}}{T}}  e^{j 2 \pi \frac{\nu_{q,i}}{\Delta f} t} \, \frac{M \mbox{Sinc}\left(M \left( t - \frac{l_{r_q}}{M} - \frac{\tau_{q,i}}{T} \right) \right)}{\mbox{Sinc}\left( t - \frac{l_{r_q}}{M} - \frac{\tau_{q,i}}{T} \right)}  \, \frac{ \mbox{Sinc}\left(M \left( t - \frac{l}{M} \right) \right)}{\mbox{Sinc}\left( t - \frac{l}{M} \right)}   \, dt       {\Bigg \}}. \\
\hline \nonumber
\end{eqnarray}
\vspace{-4mm}
\normalsize}
\end{figure*}
In (\ref{xkleqn}), step (a) follows from (\ref{Ynmeqn1}). 
For detecting the $q$-th RA preamble, the BS computes

{\vspace{-5mm}
\small
\begin{eqnarray}
\label{equation36}
z _ { q } \Define \max _ {  (k,l)  \in \mathcal { R } _ { q } } | \widehat { x } [ k,  l ] | ^ { 2 }.
\end{eqnarray}
\normalsize}
The TA estimate for the $q$-th RA preamble (normalized by $T/M$) is given by

{\vspace{-5mm}
\small
\begin{eqnarray}
\label{equation38}
\widehat { \mathrm { TA } } _ { q } & \Define &  \begin{cases}
 T_q[k,l]  & ,   \mbox{if}  \,\,\,\,  T_q[k,l]  \, \geq \, 0  \\
M +  T_q[k,l]  & ,     \mbox{otherwise} 
\end{cases} \nonumber \\
T_q[k,l]  &  \Define &    \underset { l  \, |  \, (k,l) \in \mathcal{R}_q  } { \operatorname { argmax } } | \widehat { x } [ k, l ] | ^ { 2 }  - l _ { q }.
\end{eqnarray}
\normalsize}
Therefore, for the $q$-th RA preamble, the proposed normalized TA estimate is based on the delay domain index of the DDRE (in the $q$-th RA preamble DDRE group ${\mathcal R}_q$) where the maximum
energy is received, i.e., the TA estimate (normalized by $T/M$) takes values in the discrete set $\{ 0 , 1, 2, \cdots, M-1 \}$. Therefore, the TA estimation accuracy is limited by half of the delay domain resolution i.e., $T/2M$. From (\ref{mtvaleqn}), $\frac{T}{2M} = \frac{1}{2B_c}$ and therefore the TA estimation accuracy is
limited by half of the inverse bandwidth of the RA preamble waveform.\footnote{\footnotesize{This limit is due to the quantization of the delay domain into $M$ equal sub-divisions of
width $T/M$. Although the magnitude of the worst case delay quantization error is $1/(2 B_c)$, the root-mean square quantization error will be $\frac{1}{\sqrt{12}} \frac{T}{M} = \frac{1}{\sqrt{12} B_c}$ if the path delays are equally likely to lie anywhere within a sub-division (i.e., uniform distribution within the sub-division).}} This limit can be reduced by increasing the bandwidth $B_c$.

Let the $q$-th RA preamble be transmitted by the $q'$-UT. In this paper, the $q$-th RA preamble is said to be {\em detected} if and only if,
i) the TA estimate $\widehat { \mathrm { TA } }_{ q }$ lies between the normalized and quantized smallest round-trip delay $\lfloor M\tau_{q',1}/T \rfloor $ (corresponding to the first path) and the normalized and quantized greatest round-trip delay $\lceil M \tau_{q',L_{q'}}/T \rceil $ (corresponding to the $L_{q'}$-th path), and
ii) the maximum energy $z_q$ received in the DDRE group ${\mathcal R}_q$ exceeds a pre-determined threshold $\mu$.  
This threshold is chosen so as to achieve a desired probability of false alarm which is given by

{\vspace{-5mm}
\small
\begin{eqnarray}
\label{equation37}
P_{ F A } \hspace{2mm}\Define \hspace{2mm}\mbox{Pr}\left( z _ { q } \geq \mu \,| \,q \text {-th RA Preamble is not Tx } \right).
\end{eqnarray}
\normalsize}
Therefore for a given desired false alarm probability $P_{FA}$, the probability of missed detection of the $q$-th RA preamble is given by

{\vspace{-5mm}
\small
\begin{eqnarray}
\label{equation39}
P _ { e } \Define \operatorname { Pr } {\Bigg (} \widehat { \mathrm { TA } } _ { q } \notin \left[  \left\lfloor M \tau _ { q' , 1 } / T \right\rfloor ,  \left\lceil M \tau _ { q' , L _ { q' } } / T  \right \rceil \right]  \text { or }  z _ { q } < \mu  \,\Big|\, q \hspace{-1mm}\text { -th } \mathrm { RA } \text { Preamble is } \mathrm { Tx }  \, \text { by  the}  \, q' \hspace{-1mm}\text { -th }  \, \mathrm { UT }  {\Bigg )},
\end{eqnarray}
\normalsize}
where $\mu$ is the threshold chosen to achieve the desired false alarm probability $P_{FA}$. Subsequently in this paper, we refer to this missed detection event as the timing error event and the missed detection probability as the timing error probability (TEP).

For a given UT, the delay of the channel path having the highest channel power gain can be anywhere between the smallest and the greatest path delay between that UT and the BS.
Therefore, even in a noise-free and interference-free scenario, the TA estimation error $(\tau_{q',L_{q'}} - \frac{T}{M}\widehat { \mathrm { TA } } _ { q })$ will be of the order of the channel delay spread between the $q'$-UT
(which transmitted the $q$-th RA preamble) and the BS.   
Due to this, consecutive data frame transmissions (where information is transmitted) are separated by a guard time-interval of duration roughly equal to the maximum possible channel delay spread between any
UT in the cell and the BS.\footnote{\footnotesize{This is also true in OFDM systems where a cyclic-prefix of duration equal to the maximum possible channel delay spread is used to avoid
interference between consecutive OFDM symbols.}}
From (\ref{equation39}) it follows that a successful detection (i.e., no timing error event) implies that the
TA estimation error is positive and is less than the channel delay spread.
Further, a positive TA estimation error which is less than the channel delay spread will not lead to any interference between consecutive data frame transmissions due
to the guard time-interval. In other words, a successful detection is representative of the fact that the TA estimation error is small enough
such that there is no inter-frame interference.

It is expected that in an ideal noise-free and interference-free scenario, the
proposed TA estimate will be equal to the normalized delay of the path having the highest channel power gain.
However, if the timing error event in (\ref{equation39}) were to be re-defined
to occur when the proposed TA estimate is not equal to the normalized delay of the path having the highest channel power gain, then    
a timing error event could happen even when the proposed TA estimate lies between the smallest and
the greatest path delays (i.e., even when the TA estimation error is positive and is less than the channel delay spread). Since there is anyways a guard time-interval between consecutive data frame transmissions, such a definition for timing error events will result in unnecessary random access (RA) call failures which will increase the effective latency of the RA procedure. For this reason we consider the definition
of the timing error event to be as in (\ref{equation39}).

When the UTs have i.i.d. fading statistics and their locations inside the cell are also i.i.d., then the TEP averaged over the fading and location statistics is the same for each UT, and is given by

{\vspace{-4mm}
\small
\begin{eqnarray}
\label{equation323}
P_e & =  &  \sum\limits_{k=1}^{\infty}   \mbox{Pr}(\mbox{Timing error} \, , \, Q = k \, \vert \, Q \geq 1 )  \nonumber \\
& = &  \mbox{Pr}(Q = 1 \, \vert \, Q \geq 1 )  \mbox{Pr}(\mbox{Timing error} \, | \, Q = 1)  \,  +  \, \sum\limits_{k=2}^{\infty} \mbox{Pr}(Q = k \, \vert \, Q \geq 1 )  \mbox{Pr}(\mbox{Timing error} \, | \, Q = k) 
\end{eqnarray}
\normalsize}
where $Q$ is the random number of UTs transmitting a RA preamble. Also, the conditioning over the event $\{ Q \geq 1 \}$ is because in (\ref{equation39}) the probability of timing error is conditioned on the fact that the RA preamble has been transmitted.
\subsection{Single UT Scenario}
\label{secsingleUT}
In (\ref{equation323}), the term $\mbox{Pr}(Q = 1 \, \vert \, Q \geq 1 ) \mbox{Pr}(\mbox{Timing error} \, | \, Q = 1)$ is the TEP for the single UT scenario.
When only a single UT (say $q$-th UT) transmits an RA preamble, a timing error event can only happen due to AWGN/fading/Doppler. For a given desired false alarm probability (see (\ref{equation37})), the threshold $\mu$ needs to be sufficiently high. Therefore, at low SNR $\rho$, it is possible that the received RA preamble power could fall below the threshold $\mu$ leading to missed-detection. However, at high SNR, the only possible cause of a timing error event is due to the channel induced Doppler shift.
Therefore in the following we analyze the impact of Doppler shift on the accuracy of the proposed TA estimate. From the expression for ${\widehat x}[k,l]$ in (\ref{xkleqn}) we note that the R.H.S. depends on $k$ and $l$ through two different terms. The term $U_{q,i}[k]$ depends only on $k$ and $V_{q,i}[l]$ depends only on $l$. In the following we show why this separation of terms makes the proposed TA estimate almost \emph{invariant} of the Doppler shift. 

From (\ref{equation38}), it is clear that the proposed TA estimate $\widehat { \mathrm { TA } } _ { r_q }$ of the $r_q$-th RA preamble transmitted by the $q$-th UT, depends \emph{only} on the index $l'$ of the highest energy DDRE symbol ${\widehat x} [k',l']$ over all ${\widehat x}[k,l]$, $(k,l) \in {\mathcal R}_{r_q}$. Due to the separation of terms depending on $l$ and $k$ in (\ref{xkleqn}), the highest energy DDRE symbol ${\widehat x}[k',l']$ must have its delay domain index $l'$ such that for some $i'$, $V_{q,i'}[l']$ has the largest magnitude over all $V_{q,i}[l]$, $i=1,2 \cdots, L_q$, $l=0,1,\cdots, (M-1)$. In the expression for $V_{q,i}[l]$ in (\ref{xkleqndef}) (see top of previous page), 
the term $\frac{M Sinc\left(M \left( t - \frac{l_{r_q}}{M} - \frac{\tau_{q,i}}{T} \right) \right)}{Sinc\left( t - \frac{l_{r_q}}{M} - \frac{\tau_{q,i}}{T} \right)}$ inside the integral has a sharp peak at $t =   l_{r_q}/M +  \tau_{q,i}/T$ whereas the other term $\frac{ Sinc\left(M \left( t - \frac{l}{M} \right) \right)}{Sinc\left( t - \frac{l}{M} \right)}$ has a peak at $t = l/M$. Hence $V_{q,i}[l]$ will have a large magnitude only when these peaks overlap i.e., when $ l/M \approx l_{r_q}/M +  \tau_{q,i}/T $. Therefore, for the single UT scenario (say $q$-th UT) at high SNR, the peak magnitude of ${\widehat x}[k,l]$ is expected to be at the delay domain index $l = l_{r_q} + \lfloor M \tau_{q,i}/T \rfloor$ or $l =  l_{r_q} + \lceil M \tau_{q,i}/T \rceil$ for some $i=1,2,\cdots,L_q$. Hence $\widehat { \mathrm { TA } } _ { r_q }$ is expected to lie in the set $\{ \lfloor M \tau_{q,1}/T \rfloor  , \cdots,  \lceil M \tau_{q,L_q}/T \rceil \}$. At high SNR, the peak energy $z_{r_q}$ would also exceed the threshold $\mu$ with high probability (see (\ref{equation39})). In other words, the single-UT term $\mbox{Pr}(Q = 1 \, \vert \, Q \geq 1 )  \mbox{Pr}(\mbox{Timing error} \, | \, Q = 1)$ in the R.H.S of (\ref{equation323}) can be made as small as possible by choosing a sufficiently high SNR $\rho$.      
Next, for a desired false alarm probability we derive the expression for the threshold $\mu$.                
\begin{theorem}
\label{thmu}
In the single-UT scenario with a rectangular receiver window $W_{rx}[n] = 1, n=0,1,\cdots, (N-1)$, the threshold $\mu$ required to
achieve a desired false alarm probability $p_{fa}$ is

{\vspace{-5mm}
\small
\begin{eqnarray}
\label{mupfaeqn}
\mu(p_{fa}) & = & -N_o (1 + \lceil G B_c \rceil) \left\lfloor \frac{B_cT_c}{1 + \lceil G B_c \rceil} \right\rfloor    \log \left[  1 - \left(  1 - p_{fa}\right)^{\frac{1}{N_1 (1 + \lceil G B_c \rceil)} }  \right].
\end{eqnarray}
\normalsize}
\end{theorem}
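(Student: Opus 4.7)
The plan is to reduce the false-alarm computation to the distribution of the maximum of i.i.d.\ exponentials. Under the event that the $q$-th RA preamble is not transmitted (single-UT setup), the signal contribution in (\ref{xkleqn}) vanishes on $\mathcal{R}_q$, so $\widehat{x}[k,l] = w[k,l]$ for $(k,l) \in \mathcal{R}_q$, where $w[k,l] = \sum_{m,n} W[n,m] \, e^{j 2 \pi (ml/M - nk/N)}$ is the SFFT of the TF-domain noise. With the rectangular receive window $W_{rx}[n] = 1$, I would first verify that $\{W[n,m]\}$ are i.i.d.\ $\mathcal{CN}(0, N_o)$: independence across $n$ is immediate from disjointness of the integration intervals $[nT, (n+1)T)$ in (\ref{Ynmeqn1}), while orthogonality of $\{e^{-j 2 \pi m \Delta f t}\}_{m=0}^{M-1}$ over any interval of length $T = 1/\Delta f$ decorrelates distinct $m$ indices within the same $n$.

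Next I would compute the covariance of the Gaussian process $w[k,l]$. Using the identity $\sum_{m=0}^{M-1} e^{j 2 \pi m (l-l')/M} = M \delta_{l,l'}$ and its analogue in $n$ (with Kronecker deltas taken modulo $M$ and $N$), the calculation collapses to $\mathbb{E}[w[k,l] \, w^*[k',l']] = MN \cdot N_o$ when $(k,l) = (k',l')$ and zero otherwise. Hence $\{w[k,l]\}$ are jointly i.i.d.\ $\mathcal{CN}(0, MN \cdot N_o)$ and the squared magnitudes $|w[k,l]|^2$ are i.i.d.\ exponential with mean $MN \cdot N_o$. Since (\ref{ddredef}) gives $|\mathcal{R}_q| = M N_1$, it follows that
\[
\Pr(z_q < \mu) \, = \, \prod_{(k,l) \in \mathcal{R}_q} \Pr(|w[k,l]|^2 < \mu) \, = \, \bigl[1 - e^{-\mu/(MN \cdot N_o)}\bigr]^{M N_1}.
\]
Setting $p_{fa} = 1 - \Pr(z_q < \mu)$, inverting for $\mu$, and substituting $M = 1 + \lceil G B_c \rceil$ from (\ref{mtvaleqn}) together with $N = \lfloor B_c T_c / (1 + \lceil G B_c \rceil) \rfloor$ from (\ref{Neqn1}) would then yield the stated expression.

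The main obstacle, modest as it is, is the joint independence of $\{w[k,l]\}$ across $(k,l)$. This holds precisely because the rectangular window preserves the unitary structure of the two discrete Fourier transforms that make up the SFFT; for a non-rectangular $W_{rx}[n]$ the covariance would pick up off-diagonal contributions indexed by $k$, the $|w[k,l]|^2$ would no longer be i.i.d., and the clean product form above would degrade to an approximation that one would have to control separately. Uncorrelatedness suffices for independence here because the underlying AWGN $w(t)$ is circular complex Gaussian, so all joint statistics are determined by the covariance.
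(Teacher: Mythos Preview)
Your proposal is correct and follows essentially the same route as the paper's proof: identify $\widehat{x}[k,l]=w[k,l]$ on $\mathcal{R}_q$ under the null, establish that $w[k,l]\sim\text{i.i.d.}\ \mathcal{CN}(0,MNN_o)$ with the rectangular window, write the false-alarm probability as the product over $|\mathcal{R}_q|=MN_1$ i.i.d.\ exponentials, and invert. You supply more detail than the paper does on why the SFFT preserves the i.i.d.\ Gaussian structure (the paper simply asserts it), but the argument is otherwise identical.
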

\begin{proof}
See Appendix \ref{appendixC}.
\end{proof}
From Theorem \ref{thmu} it is clear that for a given $p_{fa}$, the required threshold $\mu$ decreases with decreasing $N_o$ (i.e., increasing SNR $\rho$). This substantiates our
claim (in the paragraph before Theorem \ref{thmu}) that at high $\rho$ the peak received RA preamble energy $z_q$ would exceed $\mu$ with high probability. 
\subsection{Multi-UT scenario}
\label{secmultiUT}
In a multi-UT scenario, as each UT {\em randomly} chooses one of the $R$ RA preambles and transmits it, 
two or more UTs could transmit the same RA preamble (i.e., a collision event). In a collision, it is clear that even at high SNR $\rho$, there will be timing error for a UT whose
received RA preamble power is smaller than that of some other UT which has transmitted the same RA preamble.
Based on this, the following Theorem gives a lower bound on the timing error probability (TEP). 
\begin{theorem}
\label{thm44}
If all UTs have i.i.d. fading and location statistics, then the average TEP for a given UT is lower bounded by

{\vspace{-5mm}
\small
\begin{eqnarray}
\label{eqnPebnd}
P_e & > & \sum\limits_{k=2}^\infty \mbox{Pr}(Q = k \, | \, Q \geq 1)   \left[  \sum\limits_{p=2}^k \, C(k-1, p-1) \, \frac{(p-1)}{p} \frac{(R - 1)^{k - p}}{ R^{k-1}} \right] \nonumber \\
& = &  \sum\limits_{k=2}^\infty \mbox{Pr}(Q = k \, | \, Q \geq 1) \, \left[   1 - \frac{R^k - (R-1)^k}{k  \, R^{k-1}} \right]  \,\,,\,\,  \nonumber \\
& &  C(m,n)  \Define  \frac{m!}{n! (m -n)!}
\end{eqnarray}
\normalsize}
where $R$ is the number of allowed RA preambles and the random variable $Q$ is the number of RA requests received in an OTFS frame.
\end{theorem}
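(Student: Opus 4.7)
\textbf{Proof plan for Theorem \ref{thm44}.}
My plan is to lower bound the TEP by retaining only the timing errors that are forced by RA preamble collisions, and then evaluating that contribution in closed form using a symmetry argument combined with the binomial theorem. First I condition on the event $\{Q = k\}$ and focus on a tagged UT among the $k$ transmitting UTs. Since each of the $k$ UTs picks one of the $R$ preambles uniformly and independently, the number of other UTs that happen to pick the same preamble as the tagged UT is $\text{Binomial}(k-1, 1/R)$. Hence the probability that a collision of total multiplicity $p$ (i.e., $p-1$ of the remaining $k-1$ UTs pick the same preamble as the tagged UT) occurs is exactly $C(k-1,p-1)\,(1/R)^{p-1}\,((R-1)/R)^{k-p} = C(k-1,p-1)(R-1)^{k-p}/R^{k-1}$.

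Next, I would use the i.i.d.\ fading and location assumption to argue the key symmetry: conditional on a collision of size $p$, the $p$ UTs sharing the preamble have exchangeable received powers, so each of them is the one with the strictly largest received RA preamble power with probability $1/p$ (ignoring the measure-zero tie events). According to the detection/TA rule in (\ref{equation36})--(\ref{equation39}), at most the UT with the largest received power in a colliding group can be correctly detected; the other $p-1$ UTs are necessarily missed. Thus, conditional on $Q = k$, the tagged UT suffers a timing error with probability at least
\begin{equation*}
\sum_{p=2}^{k} C(k-1,p-1)\,\frac{p-1}{p}\,\frac{(R-1)^{k-p}}{R^{k-1}}.
\end{equation*}
Multiplying by $\mbox{Pr}(Q=k\,|\,Q\geq 1)$ and summing over $k\geq 2$ yields the claimed lower bound. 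The inequality is \emph{strict} because the above argument ignores all non-collision sources of timing error, i.e., noise-induced missed detection (the $z_q < \mu$ event), Doppler-induced delay quantization errors, and adjacent-DDRE-group leakage from the multi-UT single-preamble contributions analyzed in Section~\ref{secmultiUT}; moreover even the ``winner'' of a collision can still fail the threshold or delay-range test in (\ref{equation39}).

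Finally I would simplify the inner sum to the closed form $1 - (R^k - (R-1)^k)/(k R^{k-1})$. The main algebraic identity is $C(k-1,p-1)/p = C(k,p)/k$, which lets me write
\begin{equation*}
\sum_{p=1}^{k} C(k-1,p-1)\,\frac{p-1}{p}\,\frac{(R-1)^{k-p}}{R^{k-1}} = 1 - \frac{1}{k R^{k-1}}\sum_{p=1}^{k} C(k,p)(R-1)^{k-p},
\end{equation*}
where the first equality uses the fact that the total probability $\sum_{p=1}^{k} C(k-1,p-1)(R-1)^{k-p}/R^{k-1}$ equals $1$ (binomial theorem applied to $((R-1)+1)^{k-1}/R^{k-1}$), and the $p=1$ term contributes zero to the left-hand side. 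Applying the binomial theorem once more, $\sum_{p=1}^{k} C(k,p)(R-1)^{k-p} = R^{k} - (R-1)^{k}$, which delivers the second equality in (\ref{eqnPebnd}).

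The hardest step to get right, and the only one that is not purely mechanical, is the symmetry/exchangeability argument in the second paragraph: one must verify that the uniform random preamble choice, combined with the i.i.d.\ fading and location assumption, really does make all $p$ colliding UTs statistically exchangeable with respect to which of them attains the maximum of $|\widehat{x}[k,l]|^2$ over $\mathcal{R}_q$, so that the per-UT conditional miss probability within a collision is exactly $(p-1)/p$. Once this symmetry is established, the rest reduces to the binomial manipulation above.
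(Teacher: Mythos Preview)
Your proof is correct and follows essentially the same probabilistic argument as the paper: lower-bound $P_e$ by $\Pr(\text{Timing error and Collision})$, condition on $Q=k$, decompose by collision multiplicity $p$, use the binomial probability $C(k-1,p-1)(R-1)^{k-p}/R^{k-1}$, and invoke exchangeability of the colliding UTs to obtain the conditional miss probability $(p-1)/p$. The only difference is in the closed-form simplification: the paper substitutes $l=p-1$, rewrites $\frac{1}{(l+1)(R-1)^l}=(R-1)\int_0^{1/(R-1)}x^l\,dx$, interchanges sum and integral, and integrates the binomial series $(1+x)^{k-1}$, whereas you use the identity $C(k-1,p-1)/p=C(k,p)/k$ and apply the binomial theorem directly to $\sum_{p=1}^k C(k,p)(R-1)^{k-p}=R^k-(R-1)^k$. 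Your route is shorter and avoids the integral detour; otherwise the two proofs are identical in structure and content.
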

\begin{proof}
See Appendix \ref{appendix_E}.
\end{proof}
\begin{corollary}
\label{corollary1}
For given probabilities $\{ \mbox{Pr}(Q = k \, | \, Q \geq 1) \}_{k=1}^{\infty}$, the lower bound in (\ref{eqnPebnd}) decreases monotonically with increasing number of RA preambles $R$.  
\end{corollary}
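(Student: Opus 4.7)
The plan is to reduce the monotonicity of the full weighted sum to monotonicity of each summand in $k$, and then reduce that to a clean one-variable calculus inequality. Since the weights $\Pr(Q=k \,|\, Q\geq 1)$ are nonnegative and, by hypothesis, fixed (independent of $R$), it suffices to prove that for every integer $k\geq 2$ the bracketed quantity $a_k(R) \triangleq 1 - \frac{R^k - (R-1)^k}{k R^{k-1}}$ is nonincreasing in $R$, or equivalently that $g_k(R) \triangleq \frac{R^k - (R-1)^k}{k R^{k-1}}$ is nondecreasing in $R$.

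The next step is a convenient change of variable. Rewriting $g_k(R) = \frac{1 - (1-1/R)^k}{k/R}$ and setting $u = 1/R \in (0,1]$, the task becomes showing that $h(u) \triangleq \frac{1-(1-u)^k}{u}$ is nonincreasing on $(0,1]$. I would treat $R$ as a continuous real parameter so that a derivative argument applies; monotonicity in the continuous variable immediately yields monotonicity at integer values.

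Differentiating gives $h'(u) = \frac{k u (1-u)^{k-1} - \left(1 - (1-u)^k\right)}{u^2}$, so the sign of $h'(u)$ is the sign of the numerator. Substituting $v = 1 - u \in [0,1)$ and simplifying reduces the claim $h'(u) \leq 0$ to the elementary inequality $\phi(v) \triangleq k v^{k-1} - (k-1) v^k \leq 1$ on $[0,1]$. To close this, I would observe that $\phi'(v) = k(k-1) v^{k-2}(1-v) \geq 0$ on $[0,1]$, so $\phi$ is nondecreasing, and $\phi(1) = k - (k-1) = 1$, hence $\phi(v) \leq 1$ throughout. This gives $h'(u) \leq 0$, hence $g_k(R)$ is nondecreasing in $R$ and $a_k(R)$ is nonincreasing, which proves the corollary upon summing against the nonnegative weights.

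I do not expect any real obstacle here: once the reduction $u=1/R$, $v=1-u$ is made, the inequality is genuinely one-line. The only care needed is the bookkeeping of the substitution and verifying that the boundary case $v=1$ (equivalently $R=\infty$) gives equality rather than a strict violation. As an alternative route, one could instead invoke the probabilistic meaning of $a_k(R)$ established in the proof of Theorem \ref{thm44}, namely that $a_k(R)$ is the probability that a tagged UT both shares its preamble with at least one other among the $k$ transmitting UTs and is not the strongest among the colliders; a coupling that distributes preamble $R+1$ uniformly over $\{1,\dots,R\}$ can only merge collision classes and therefore only increase this probability when $R$ is replaced by $R$ (equivalently, $R+1$ preambles give a lower collision-loss probability than $R$). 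I would present the calculus argument as the primary proof for brevity and quantitative clarity.
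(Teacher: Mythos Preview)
Your proof is correct and follows essentially the same strategy as the paper's: both reduce the claim to showing, term by term, that $B_k(R)=(R^k-(R-1)^k)/R^{k-1}$ has positive derivative for each $k\ge 2$, and both verify this by an elementary computation. The only cosmetic difference is that the paper checks the sign of $B_k'(R)$ directly via the binomial expansion $R^k=(1+(R-1))^k$ to show the numerator is at least $1$, whereas you substitute $u=1/R$, $v=1-u$ and bound the resulting polynomial $\phi(v)=kv^{k-1}-(k-1)v^k$ by its value $\phi(1)=1$.
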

\begin{proof}
See Appendix \ref{prfcorollary1}.
\end{proof}
This result is expected since the chance of collisions decreases with increasing $R$. 
We note that, the lower bound in (\ref{eqnPebnd}) is independent of the SNR $\rho$ and depends only on $R$. Therefore, at high SNR, a timing error event can happen primarily due to two reasons, i) ``collisions'', and ii) ``interference'' from RA preamble transmission on an adjacent DDRE group in the DD domain. To reduce the possibility of strong interference from an adjacent RA preamble, the width $N_1$ of each RA preamble DDRE group needs to be sufficiently large.
Let the $q'$-th UT have a single channel path with Doppler shift $\nu > 0$. If this UT transmits the $q$-th RA preamble, then from the expression of $U_{q',1}[k] \,,\, k=0,1,\cdots, (N-1) $ in (\ref{xkleqndef}) it is clear that with the rectangular window, $\vert U_{q',1}[k] \vert$ is given by

{\vspace{-2mm}
\small
\begin{eqnarray}
\vert U_{q',1}[k] \vert = \left\vert  \frac{Sinc  \left[ N \left( \frac{k_{q} - k}{N}  + \frac{\nu}{\Delta f}\right) \right]}{Sinc \left(  \frac{k_{q} - k}{N}  + \frac{\nu}{\Delta f}  \right)}  \right\vert
\end{eqnarray}
\normalsize}
which has maximum magnitude at either $k = k_{q} +  \lfloor \frac{\nu}{\Delta f/N} \rfloor $ or $k = k_{q} + \lceil \frac{\nu}{\Delta f/N} \rceil $ and therefore $k - k_{q} \approx \nu/(\Delta f/N) = \nu N T$. Hence it is clear that most of the RA preamble energy along the Doppler domain will be received at DDRE locations roughly $ \lfloor \nu N T \rfloor$ or $ \lceil \nu N T \rceil$ DDREs away from the transmit DDRE location.  
Since the Doppler shift $\nu$ is in general not an integer multiple of $\Delta f/N$, energy will also be received on the $k = k_q + (1 + \lceil \nu NT \rceil)$-th DDRE. As the Doppler shift can be both positive and negative and $\vert \nu \vert \leq \nu_{max}$, the $q$-th RA preamble group ${\mathcal R}_q$ must contain DDREs having Doppler domain indices $\{ k_q - (1 + \lceil \nu_{max} NT \rceil) , k_q -  \lceil \nu_{max} NT \rceil, \cdots, k_q -1, k_q, k_q + 1, \cdots, k_q + (1 + \lceil \nu_{max} NT \rceil) \}$ (i.e., $2\left\lceil \nu_{max} N T \right\rceil  + 3$ DDREs) in order that the interference to adjacent DDRE groups is small.
We therefore propose the width $N_1$ of each DDRE group along the
Doppler domain to satisfy
\begin{eqnarray}
\label{N1cnstr}
2\left\lceil \nu_{max} N T \right\rceil  + 3 & \leq \,\, N_1 & \hspace{-2mm} \leq  N.
\end{eqnarray}
We illustrate the choice of $N_1$ through the following example. Let us consider a cell having radius $1500$ m and where UTs cannot lie within a $100$ m distance from the BS. Let $B_c = 1.08$ MHz, $T_c = 1.6$ ms, $G = 15 \mu s$ and $\nu_{max} = 300$ Hz. Therefore from (\ref{mtvaleqn}) and (\ref{Neqn1}) we have $M=18, N=96$, $\Delta f = 60 $ KHz and $T = 1/\Delta f = 16.66 \mu s$. Let the $q$-th RA preamble be transmitted by the $q'$-th UT situated $1000$ m from the BS. Let $(k_q,l_q) = (71,1)$. The channel from this UT to the BS consists of only a single-path for which $\tau_{q',1} = 2 \times 1000/(3 \times 10^8) = 6.66 \mu s$. This corresponds to a delay domain shift by approximately $\tau_{q',1}/(T/M) = 7.2$ DDREs. 
Hence, along the delay domain most of the received energy is localized around the $\lfloor l_q + \frac{\tau_{q',1}}{T/M} \rceil= 8$-th DDRE ($\lfloor x \rceil$ rounds $x$ to the nearest integer).
From (\ref{xkleqn}) it follows that the Doppler domain distribution of the $q$-th RA preamble energy received from the $q'$-th UT along the $i$-th channel path depends on $U_{q,i}[k] =  \frac{1}{N}  \sum\limits_{n=0}^{N-1} \left( W_{rx}[n] e^{j 2 \pi n \left( \frac{k_{q} }{N} + \frac{\nu_{q',i}}{\Delta f}  \right)} \right) e^{-j2 \pi \frac{n k}{N}}, k=0,1,\cdots, N-1$ (see (\ref{xkleqndef})), which is the $N$-point Discrete Fourier Transform (DFT) of the discrete time-domain sequence $e^{j 2 \pi n \left( \frac{k_{q} }{N} + \frac{\nu_{q',i}}{\Delta f}  \right)}, n=0,1,\cdots, N-1$, windowed with the time-domain waveform $W_{rx}[n], n=0,1,\cdots, N-1$. In traditional spectral analysis of time-domain signals, windowing waveforms have been used to achieve a trade-off between spectral resolution (main-lobe width) and the energy outside the main-lobe \cite{FHarris}. Therefore, for our system, the window $W_{rx}[\cdot]$ can be used to achieve a trade-off between the number of DDREs over which most of the received $q$-th RA preamble energy is spread in the Doppler domain (i.e., main-lobe width) and the maximum side-lobe energy level in the Doppler domain. The main-lobe width determines the width of each RA preamble along the Doppler domain (i.e., $N_1$) and the maximum side-lobe energy level determines the interference caused to other RA preambles.
For better understanding of this trade-off, in Fig.~\ref{fig17} we plot the normalized $\vert {\widehat x}[k,l=8] \vert^2$ along the Doppler domain i.e., versus $k$ for different receiver windows $W_{rx}[\cdot]$, namely the rectangular window, the Hamming window and two different types of Blackman-Harris window \cite{FHarris}.\footnote{\footnotesize{The rectangular window is given by $W_{rx}[n] = 1, n=0,1,\cdots, (N-1)$ and the Hamming window is given by $W_{rx}[n] = \alpha_N (0.54 - 0.46 \cos(2 \pi n/N)) \,,\, n=0,1,\cdots, (N-1)$. The three-term Blackman-Harris window is given by $W_{rx}[n] =  0.42323 - 0.49755 \cos(2 \pi n/N) +  0.07922 \cos(4 \pi n/N)$ and the four-term Blackman-Harris window is given by $W_{rx}[n] = 0.35875 - 0.48829 \cos(2 \pi n/N) +  0.14128 \cos(4 \pi n/N) -  0.01168 \cos(6 \pi n/N) \,,\, n=0,1,\cdots,(N-1)$. For each window, $\alpha_N > 0$ is a constant such that $\sum\limits_{n=0}^{N-1} W_{rx}^2[n] \, = \, N$\cite{FHarris}.}}

In order to understand interference between adjacent RA preambles, we consider no AWGN in Fig.~\ref{fig17}. The Doppler shift of the channel path is taken to be $\nu_{q,1} =  \nu_{max} = 300$ Hz. Since $k_q + \frac{\nu_{q,1}}{\Delta f/N} = 71.48$, most of the received energy is localized to $k=71$ and $k=72$ for all the windows (see Fig.~\ref{fig17}). There is however leakage of the RA preamble energy to the $k=73$-rd DDRE (the leakage energy at $k=73$ is
about $17$ dB below the peak energy received at $k=71,72$ for the Hamming window). For the Hamming window, the leakage energy at
DDRE locations beyond the $k=73$-rd DDRE is more than $40$ dB below the peak energy received at $k=71,72$. In a near-far scenario, a UT near the BS (i.e., at a distance of $100$ m from the BS) transmits a RA preamble on a DDRE group which is adjacent (along the Doppler domain) to the DDRE group of another RA preamble transmitted by a UT at the cell-edge. With a path-loss exponent of $3$, the difference in the received RA preamble energy levels from the near and the far UT is $10 \log_{10} (1500/100)^3 = 35.3$ dB. Since the transmitted energy level is the same for all UTs, the width $N_1$ of a RA preamble must be such that the leakage from a RA preamble transmitted by a UT near the BS to an adjacent RA preamble transmitted by a cell-edge UT is at least  $35.3$ dB less than the received RA preamble energy of the UT near the BS. Therefore, with the Hamming window, it suffices to have only the Doppler domain locations $k=71,72, 73$ to be part of the DDRE group ${\mathcal R}_q$ corresponding to the transmitted RA preamble. Since, the Doppler shift can also be $\nu_{q',1} = - \nu_{max}$, the DDRE locations $k=69, 70$ must also be a part of the DDRE group. Hence, ${\mathcal R}_q = \{ (k,l) \,| \, l=0,1,\cdots, (M-1) \,,\, 69 \leq k \leq 73 \}$ and therefore $N_1 = 5$. This value of $N_1 = 5$ is equal to the
proposed lower bound to $N_1$ in (\ref{N1cnstr}), since $2\left\lceil \nu_{max} N T \right\rceil  + 3 = 2 \left\lceil 0.48 \right\rceil + 3 = 5$.
\begin{figure}[h]
	\vspace{-0.2 cm}
	\centering
	\includegraphics[width= 5.0 in, height= 3.5 in]{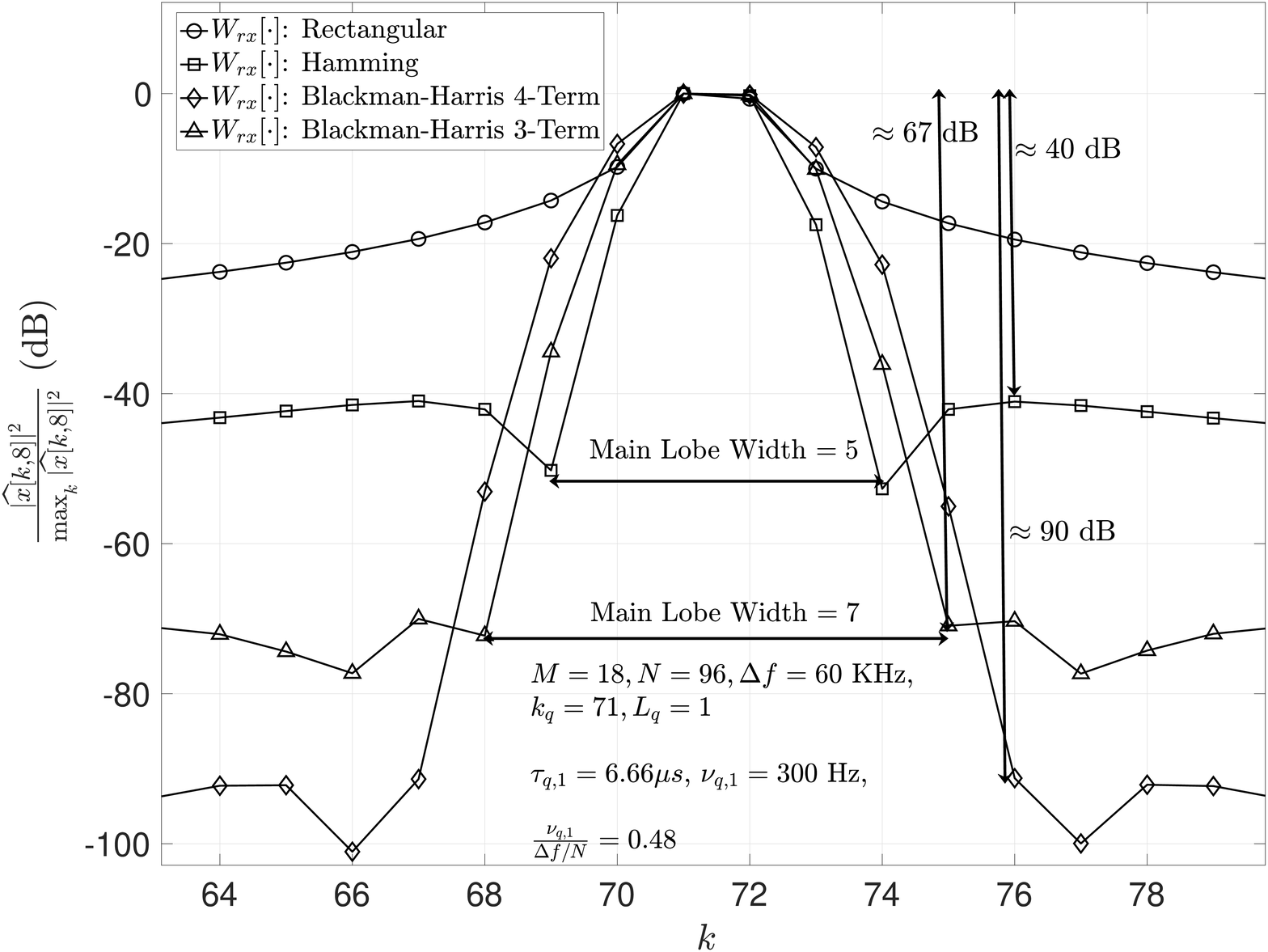}
	\vspace{-0.2 cm}
	\caption{Normalized $\vert {\widehat x}[k,l=8] \vert^2$ versus $k$.} 
	\vspace{-0.2cm}
	\label{fig17}
\end{figure}
With the Hamming window the leakage energy is still only about $40$ dB below the received RA preamble energy and therefore in a larger cell (radius $> 1500$ m), the near-far scenario could lead to a situation where the energy leaked from the received RA preamble from a UT near the BS is more than the RA preamble energy received from a cell-edge UT.
For larger cells we can use the $3$-term Blackman Harris window since the leakage energy is more than $67$ dB below the received RA preamble energy and therefore with a $5$ dB margin for fading, the cell radius must be such that the difference in the received energy levels of RA preambles transmitted from a cell edge UT and that from a UT near the BS is less than $67 - 5 = 62$ dB, i.e., $10 \log_{10}[(r_c/100)^3]  < 62$ where $r_c$ is the cell radius. Equivalently, $r_c < 100 \times 10^{6.2/3} =  11.6$ Km. However, from Fig.~\ref{fig17} it is also observed that with the $3$-term Blackman-Harris window the width of each RA preamble has to be increased to $N_1 = 7$ when compared to $N_1 = 5$ with the Hamming window. This would then reduce the number of allowed RA preambles $R = \lfloor N/N_1 \rfloor$. Decrease in $R$ would in-turn lead to increased probability of collisions and therefore a higher TEP. This observation regarding the increase in the RA preamble width $N_1$ due to more stringent leakage requirement is a manifestation of the fundamental trade-off between the highest side-lobe energy level and the equivalent noise-bandwidth of windowing waveforms used for spectral analysis \cite{FHarris}.                                 
Hence, for the multi-UT scenario there is a trade-off between choosing $N_1$ large enough so as to reduce interference from adjacent DDRE groups but at the same time it should not be so large that the probability of collisions becomes the dominant component.        
\section{Selection of Design Parameters for the Proposed OTFS Based RA Waveform}
\label{sectionSelParam}
Let $\mu_Q$ be the average number of RA requests received in an OTFS frame, $P_e$ denotes the desired timing error probability (TEP) and let $P_{fa}$ be the desired false alarm probability. In this section, for a given set of system parameters $(\nu_{max}, G, B_c, T_c, \mu_Q, P_e, P_{fa})$ we propose the selection of the design parameters $(T, \Delta f, M, N, N_1, \rho, \mu)$. Firstly, in step (1), $(T, M ,N)$ are chosen based on (\ref{mtvaleqn}) and (\ref{Neqn1}), then $\Delta f = 1/T$.
From Section \ref{secsingleUT} we know that the first term in the TEP expression in the R.H.S of (\ref{equation323}) corresponds to the TEP in the single-UT scenario and depends primarily on the SNR $\rho$. Similarly, from Section \ref{secmultiUT} we know that at high SNR, $N_1$ decides the contribution to the TEP from the remaining terms $k=2,3,\cdots, \infty$ in the R.H.S of (\ref{equation323}).    
From Section \ref{secmultiUT} we know that the choice of $N_1$ should reduce interference between adjacent RA preambles and also ensure low probability of collisions. Therefore in step (2), assuming no AWGN (i.e., very high $\rho$), through numerical simulations we find the smallest possible $N_1$ which achieves a TEP sufficiently smaller than
the desired $P_e$ (say half of the desired $P_e$).
Next, in step (3), through numerical simulations we select the operating $\rho$ such that the overall TEP (i.e., sum of all terms in the R.H.S. of (\ref{equation323})) is less than or equal to the desired TEP $P_e$ (for each value of $\rho$, the false alarm probability threshold $\mu$ is chosen so as to guarantee the desired false alarm probability $P_{fa}$).

This design procedure will be successful in achieving the desired $P_e$, only if $P_e$ is larger than the lower bound (due to collisions) given by (\ref{eqnPebnd}) in Theorem \ref{thm44}.
From Section \ref{secsingleUT} and Section \ref{secmultiUT} it is clear that with an appropriately chosen $N_1$, the interference due to an adjacent RA preamble can be reduced effectively and hence at high SNR, a timing error event can happen primarily due to collisions. It therefore appears that with the proper choice of $N_1$ the lower bound to the TEP as proposed in Theorem \ref{thm44} is in fact the limiting value of TEP as $\rho \rightarrow \infty$. This has in fact been verified through exhaustive simulations, some of which have been presented in Section \ref{simsection}. This lower limit however depends only on the number of RA preambles $R$ and the probabilities $\mbox{Pr}(Q = k \, | \, Q \geq 1)$. 
These probabilities depend on the spatial density of the RA call rate which we denote by $\lambda$ (in calls/sec/sq. Km), the cell-size and the time duration $T_a$ allowed between
two successive RA requests from a UT (e.g., $10$ ms in 4G-LTE). These probabilities also depend on the statistics of the arrival process. In this paper, we model the random number of simultaneous RA requests (i.e., $Q$) to be Poisson distributed. For a circular cell with prohibited inner radius $r_a$ and outer radius $r_c$, the mean value of $Q$ is given by

{\vspace{-5mm}
\small
\begin{eqnarray}
\label{muQeqn}
\mu_Q \Define {\mathbb E}[Q] & = & \pi (r_c^2 - r_a^2) \lambda T_a.
\end{eqnarray}
\normalsize}
As $Q$ is Poisson distributed we have $\mbox{Pr}(Q = k) = \frac{e^{-\mu_Q} \mu_Q^k}{k !}$ and therefore

{\vspace{-5mm}
\small
\begin{eqnarray}
\mbox{Pr}(Q = k \, | \, Q \geq 1)   =    \frac{\mbox{Pr}(Q = k)}{\mbox{Pr}(Q \geq 1)}  =    \frac{e^{-\mu_Q} \mu_Q^k}{(1 - e^{-\mu_Q}) k!}    \,,\, k \geq 1.
\end{eqnarray}
\normalsize}
\begin{figure}[h]
	\vspace{-0.2 cm}
	\centering
	\includegraphics[width= 5.0 in, height= 3.5 in]{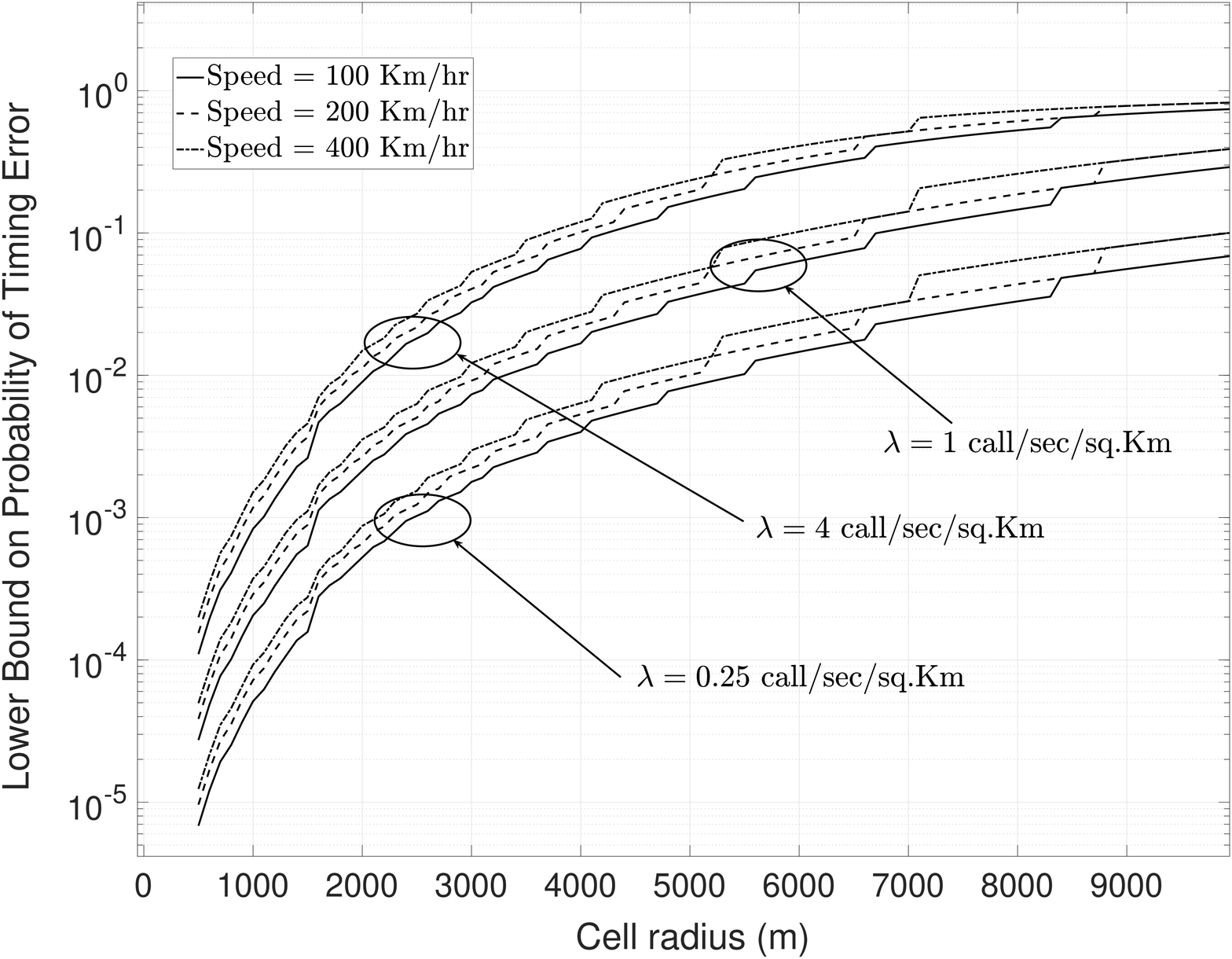}
	\vspace{-0.2 cm}
	\caption{Lower bound to the timing error probability (TEP) (R.H.S. of (\ref{eqnPebnd})) versus cell radius.} 
	\vspace{-0.2cm}
	\label{fig21}
\end{figure}
Next, in Fig.~\ref{fig21} we plot the lower bound to the TEP given by the R.H.S. in (\ref{eqnPebnd}) versus increasing cell-radius $r_c$ ($500 \leq r_c  \leq 10000$ m) for different
combinations of mobile speed ($100, 200, 400$ Km/hr) and call rate spatial density $\lambda = 0.25, 1.0, 4.0$ calls/sec/sq. km. We consider $B_c = 1.08$ MHz, $T_c = 1.6$ ms. The carrier frequency is taken to be $f_c = 4 \times 10^9$ Hz and therefore the maximum Doppler shift is $\nu_{max} = v f_c /(3 \times 10^8)$ where $v$ is the mobile speed in m/s. The maximum round-trip propagation delay is taken to be $G = \frac{2 \, r_c} {3 \times 10^8}$. The other parameters $(M,N,T, \Delta f)$ are chosen based on (\ref{mtvaleqn}) and (\ref{Neqn1}). As observed in Section \ref{secmultiUT}, for a path-loss exponent of $3$ and inner radius $r_a = 100$ m, we had proposed the use of the Hamming window with $N_1= 2\left\lceil \nu_{max} N T \right\rceil  + 3$ when $r_c \leq 1500$ and the $3$-term Blackman-Harris window with $N_1 = 2\left\lceil \nu_{max} N T \right\rceil  + 5$ when $1500 < r_c   \leq 10000$. We consider $T_a = 10^{-2}$ second.  

From Fig.~\ref{fig21} it is clear that, for a given mobile speed and a given $\lambda$, TEP increases with increasing cell radius. This is expected, since
with increasing cell radius the leakage from an adjacent RA preamble in a near-far scenario can be more severe (see the discussion in Section \ref{secmultiUT}).    
Additionally, with increase in cell radius, the average number of simultaneous RA calls in one OTFS frame i.e., $\mu_Q = \pi (r_c^2 - r_a^2) \lambda T_a$ increases and which therefore increases the TEP. However, the most interesting observation is that, for a given cell radius and a given $\lambda$, the TEP increases only slightly even when the mobile speed increases from $100$ Km/hr to $400$ Km/hr. This demonstrates the robustness of the proposed OTFS based RA preamble waveforms against Doppler spread. 
\section{Numerical Results}
\label{simsection}
We consider a single-cell scenario with a cell radius of $1500$ m and a single-antenna BS at the
centre of the cell. Single-antenna UTs are uniformly distributed in the cell, except inside a circular
region of radius $100$ m around the BS. The number of UTs transmitting RA preambles simultaneously
in the same OTFS frame is Poisson distributed. For the channel from a UT
to the BS, we consider the tapped delay line model in 3GPP Extended Typical Urban (ETU) channel
\cite{3GPPmdl} where the delay profile $\{\tau_{q,i}\}_{i=1}^{L_q=9}$
 is $ \tau_q + [0, 50, 120, 200, 230, 500, 1600, 2300, 5000]$
ns (i.e., a delay spread of $5 \mu s$) and the corresponding power profile $\left\{ {\mathbb{E}}\left [\left|h_{q,i}\right|^2 / \beta_q \right] \right\}_{i=1}^{L_q=9}$ is $[-1, -1, -1, 0, 0, 0, -3, -5, -7]$
dB. Here $\tau_q$ ns is the smallest path delay between the $q$-th UT and the BS. The channel path gains $h_{q,i}$ in (\ref{equation1}) are modeled as independent Rayleigh faded random variables and the power profile is normalized to satisfy (\ref{equation2}). The path-loss exponent is taken to be three and the reference path-loss for a cell-edge UT is unity, i.e., at the BS the received RA preamble power to noise power ratio is $\rho$ for a RA preamble transmitted by a cell-edge UT whereas it is $\rho \times (1500/100)^3$ for a UT at a distance of $100$ m from the BS.
For the $q$-th UT, the Doppler shift for
the $i$-th propagation path is taken to be $\nu_{q,i} = \nu_{max} \cos{(\theta_{q,i})}$ where $\theta_{q, i}$ are independent and
uniformly distributed in $[0,  2\pi)$ \cite{emanuele} and $\nu_{max}$ is the maximum possible Doppler shift.
For the purpose of comparison we also consider a 4G-LTE type system where $B_c =1.08$ MHz and the RA preamble is as per LTE preamble format $2$ which consists of a $1.6$ ms long Zadoff-Chu (ZC)
sequence \cite{4GLTE}. 
For the proposed OTFS based RA preamble transmission also we consider $T_c= 1.6$ ms and $B_c = 1.08$ MHz.

\begin{figure}[h]
	\vspace{-0.2 cm}
	\hspace{-0.2 in}
	\centering
	\includegraphics[width= 5.0 in, height= 3.5 in]{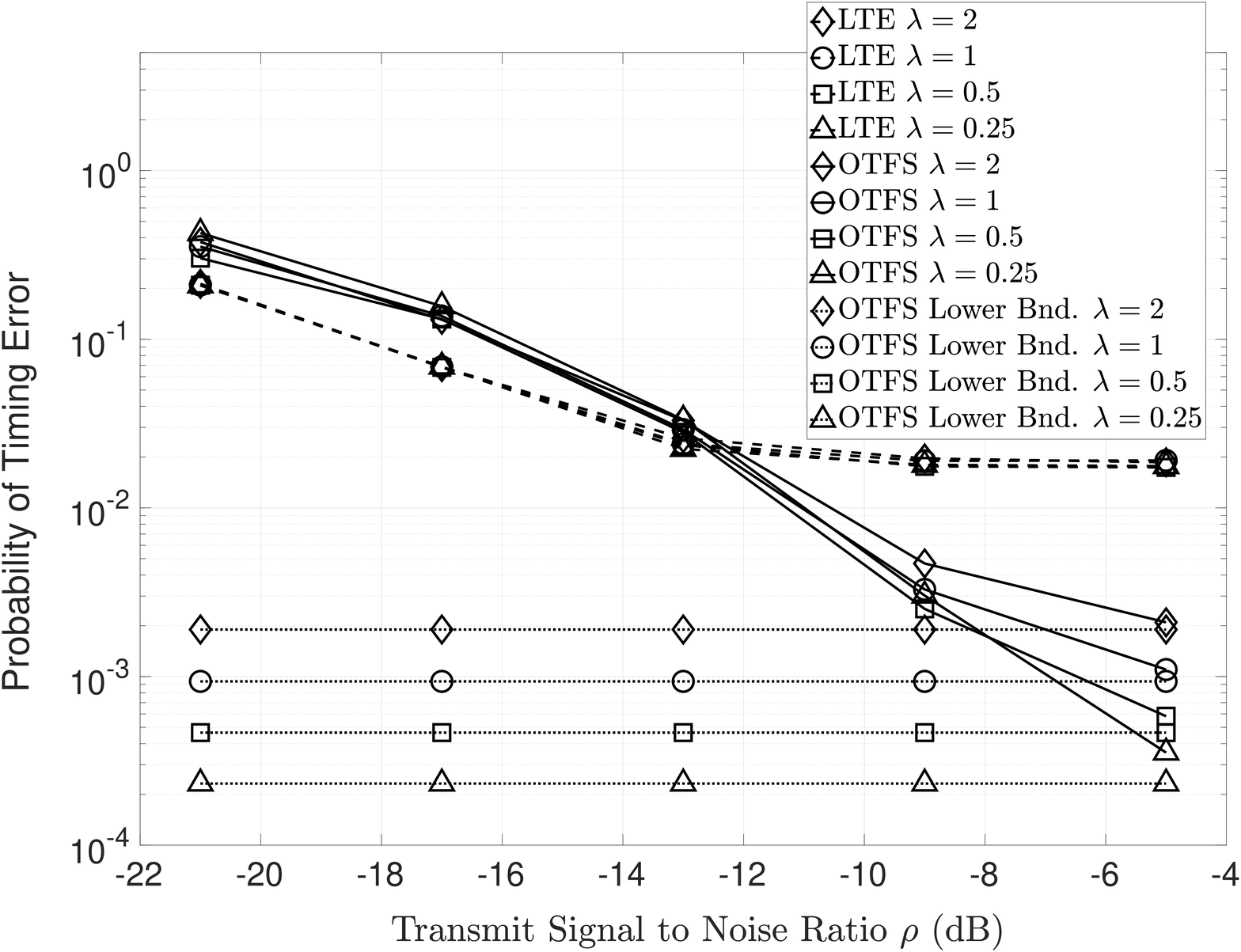}
	\vspace{-0.1 cm}
	\caption{Timing Error Probability (TEP) of the proposed OTFS based RA and that of 4G-LTE RA vs. SNR $\rho$ for $\lambda = 0.25, 0.5, 1.0, 2.0$ calls/sec/sq. Km. $B_c = 1.08$ MHz, $T_c = 1.6$ ms, $P_{FA} = 1 \times 10^{-2}$, $\nu_{max} = 300$ Hz.} 
	\vspace{-0.2 cm}
	\label{fig8}
\end{figure}

Next, in Fig.~\ref{fig8}, for a fixed desired false alarm probability $P_{FA} = 1 \times 10^{-2}$ and $\nu_{max} = 300$ Hz,  we compare the             
timing error probability (TEP) of the proposed OTFS based RA and that of a 4G-LTE based RA as a function of increasing SNR ($\rho$) for different RA call rate spatial densities $\lambda = 0.25, 0.5, 1.0, 2.0$ calls/sec/sq. Km. The TEP for the proposed OTFS based RA is given by (\ref{equation39}). For a given $\lambda$, the mean value of $Q$ i.e., $\mu_Q$ is given by (\ref{muQeqn}) where we consider $T_a = 1 \times 10^{-2}$ sec. The maximum round-trip propagation delay is taken to be $G = 2 \times 1500/(3 \times 10^8) + 5 \times 10^{-6}$ sec and therefore from (\ref{mtvaleqn}) and (\ref{Neqn1}) we have $M=18, N= 96, \Delta f = 60$ KHz and $T = 16.66 \mu s$. Based on the discussion in Section \ref{secmultiUT}, the width $N_1$ of each DDRE group is taken to be $2\left\lceil \nu_{max} N T \right\rceil  + 3 = 5$ (see (\ref{N1cnstr})) and the Hamming window is used at the receiver. The number of RA preambles is $R = \lfloor N/N_1 \rfloor = 19$. In Fig.~\ref{fig8}, for the OTFS based RA method, we also plot the proposed collision based lower bound to the TEP in (\ref{eqnPebnd}). 
From Fig.~\ref{fig8} it is clear that at sufficiently high $\rho$, for all considered values of $\lambda$, the TEP of the proposed OTFS based RA method is significantly smaller than that of the 4G-LTE RA method.
For the proposed OTFS based RA, as discussed in Section \ref{secmultiUT}, at high $\rho$ the TEP is dominated by collision events. Indeed, in Fig.~\ref{fig8}, at a high $\rho = -5$ dB, the TEP is close to the analytical lower bound given by (\ref{eqnPebnd}).

\begin{figure}[h]
	\vspace{-0.2 cm}
	\hspace{-0.2 in}
	\centering
	\includegraphics[width= 5.0 in, height= 3.5 in]{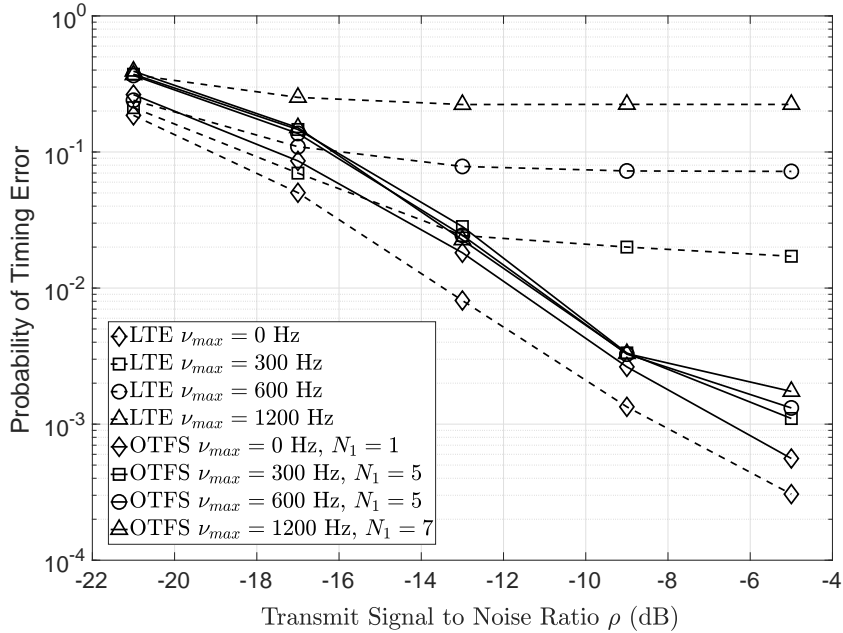}
	\vspace{-0.1 cm}
	\caption{Timing Error Probability (TEP) of the proposed OTFS based RA and that of 4G-LTE RA vs. SNR $\rho$ for RA call rate spatial density $\lambda = 1.0$ call/sec/sq. Km. $B_c = 1.08$ MHz, $T_c = 1.6$ ms, $P_{FA} = 1 \times 10^{-2}$, $\nu_{max} = 0,300,600,1200$ Hz.} 
	\vspace{-0.2 cm}
	\label{fig9}
\end{figure}

In Fig.~\ref{fig9} we plot the TEP of the proposed OTFS based RA and the RA method in 4G-LTE versus SNR $\rho$ for $\nu_{max} = 0, 300, 600, 1200$ Hz, a fixed $\lambda = 1$ call/sec/sq. Km, and
a fixed desired false alarm probability $P_{FA} = 1 \times 10^{-2}$.
For the proposed method, we choose $N_1 = 1$ when $\nu_{max} = 0$ Hz and for all other values of $\nu_{max}$ we choose $N_1 =  2\left\lceil \nu_{max} N T \right\rceil  + 3$ (see (\ref{N1cnstr})).
The Hamming window is used at the receiver. 
From Fig.~\ref{fig9} it is clear that, with increase in Doppler shift, the TEP of the 4G-LTE RA method degrades significantly. In comparison, the TEP of the proposed OTFS based RA remains almost the same even when the maximum Doppler shift $\nu_{max}$ is increased from $300$ Hz to $600$ Hz and then to $1200$ Hz. This is primarily because, channel induced Doppler shift affects only the location of the received RA symbol energy along the Doppler domain and does not significantly affect the location of the received RA symbol energy along the delay domain. This is in turn due to the separation of the terms $U_{q,i}[k]$ and $V_{q,i}[l]$ in the expression for the received delay-Doppler domain RA signal ${\widehat x}[k,l]$ in (\ref{xkleqn}) (see discussion in Section \ref{secsingleUT}).

\begin{table}[t]
\centering
\caption{Timing Error Probability (TEP) vs. $(N_1, \nu_{max})$, $\lambda = 1$ call/sec/sq. Km, $\rho = -5$ dB.}
{\small
\begin{tabular}{|| c|c|c|c||}
\hline
& $N_1 = 1$  & $N_1 = 2$ &  $N_1 = 5$   \nonumber \\
\hline
OTFS Lower & $1.85 \times 10 ^{-4}$ & $3.7 \times 10^{-4} $ &  $9.4 \times 10^{-4} $   \nonumber \\
Bound in (\ref{eqnPebnd}) & & & \nonumber \\
\hline
$\nu_{max} = 0$ Hz  & $5.6 \times 10^{-4}$ & $1.0 \times 10^{-3} $ &  $1.0 \times 10^{-3} $  \nonumber \\
\hline
$\nu_{max} = 300$ Hz & $6.4 \times 10^{-4}$ & $1.2 \times 10^{-3} $ &  $1.1 \times 10^{-3} $  \nonumber \\
\hline
$\nu_{max} = 600$ Hz & $1.8 \times 10^{-3}$ & $1.3 \times 10^{-3} $ &  $1.3 \times 10^{-3} $  \nonumber \\
\hline
$\nu_{max} = 1200$ Hz & $3.3 \times 10^{-2}$ & $1.5 \times 10^{-3} $ &  $1.6 \times 10^{-3} $  \nonumber \\
\hline
\end{tabular}
\normalsize}
\label{tab1}
\end{table}

\begin{table}[t]
\centering
\caption{Timing Error Probability (TEP) of the proposed OTFS based RA vs. $(P_{FA}, \nu_{max})$, $\lambda = 1$ call/sec/sq. Km, $\rho = -5$ dB.}
{\small
\begin{center}
\begin{tabular}{|| c|c|c|c||}
\hline
& $P_{FA}=1.0 \times 10^{-1}$  & $P_{FA}=1.0 \times 10^{-2}$ &  $P_{FA}=1.0 \times 10^{-3}$   \nonumber \\
\hline
$\nu_{max} = 0$ Hz, $N_1 = 1$  &  $5.4 \times 10^{-4}$  & $5.6 \times 10^{-4}$  & $6.5 \times 10^{-4}$  \nonumber \\
\hline
$\nu_{max} = 300$ Hz, $N_1 = 5$  &  $9.7 \times 10^{-4}$  & $1.1 \times 10^{-3}$  & $1.3 \times 10^{-3}$  \nonumber \\
\hline
$\nu_{max} = 600$ Hz, $N_1 = 5$  &  $1.3 \times 10^{-3}$  & $1.3 \times 10^{-3}$  & $1.5 \times 10^{-3}$  \nonumber \\
\hline
$\nu_{max} = 1200$ Hz, $N_1 = 7$ &  $1.7 \times 10^{-3}$ & $1.7 \times 10^{-3}$  & $1.9 \times 10^{-3}$  \nonumber \\
\hline
\end{tabular}
\end{center}
\normalsize}
\label{tab2}
\end{table}

In Section \ref{secmultiUT} it was observed that a small $N_1$ would decrease the collision probability due to increase in the number of RA preambles, but at the same time this would also increase interference between adjacent DDRE groups leading to increase in TEP. To understand this tradeoff, in Table-\ref{tab1}, for a fixed $\lambda = 1$ call/sec/sq. Km, $\rho = -5$ dB and a fixed desired false alarm probability $P_{FA} = 1.0 \times 10^{-2}$, we list the TEP for DDRE group width $N_1 = 1,2,5$ and $\nu_{max} = 0, 300, 600, 1200$ Hz. For each $N_1$ we also list the proposed lower bound to the TEP in (\ref{eqnPebnd}), which is based on collision events. The Hamming window is used at the receiver. For a fixed $N_1$, the TEP increases with increasing $\nu_{max}$ as a higher $\nu_{max}$ implies more interference between adjacent DDRE groups. For small $\nu_{max}$, even $N_1= 1$ could be sufficient in terms of guaranteeing little interference between adjacent DDRE groups. In the current example, the width of each DDRE is $\Delta f/ N = 625$ Hz along the Doppler domain and for $\nu_{max} = 0, 300$ Hz it is observed that for a fixed $\nu_{max}$ the TEP increases as $N_1$ is increased from $N_1=1$ to $N_1 = 2$ as this increases the probability of collisions (i.e., $N_1 = 1$ is sufficient).
However, when $\nu_{max}$ is large (e.g. $\nu_{max} = 600,1200$ Hz in Table-\ref{tab1}), then $N_1 = 1$ may not be sufficient in limiting the interference between adjacent DDRE groups. For example, in Table-\ref{tab1}, for $\nu_{max} = 1200$ Hz the TEP is $3.3 \times 10^{-2}$ when $N_1 = 1$ but decreases significantly to $1.5 \times 10^{-3}$ when $N_1$ is increased to $2$. This is because, although the probability of collisions increases with increasing $N_1$ from $N_1 = 1$ to $N_1 = 2$, for a high $\nu_{max}$ it still results in reduction in the overall TEP due to reduction in interference between adjacent DDRE groups. However, with further increase in $N_1$, very little further reduction in interference is possible as $N_1$ is already ``sufficiently large'' and therefore the TEP may increase due to increase in the probability of collisions (as the 
number of RA preambles $R = \lfloor N/ N_1\rfloor$ will decrease with increase in $N_1$). From Table-\ref{tab1}, $N_1 = 2$ appears to be sufficient for $\nu_{max} = 600,1200$ Hz. However, in Section \ref{secmultiUT} we had proposed that $N_1 \geq 2 \lceil \nu_{max} N T \rceil + 3 = 7$ (for $\nu_{max} = 1200$ Hz). This discrepancy is due to the fact that in Section \ref{secmultiUT} we had considered the worst-case near-far scenario where one UT is near the BS and the other is at the cell-edge, whereas Table-\ref{tab1} shows the ``average'' TEP when UTs are uniformly distributed in the cell.

In Fig.~\ref{fig8} and Fig.~\ref{fig9} we have plotted the TEP for a fixed desired false alarm probability $P_{FA} = 1 \times 10^{-2}$.
Next, we study the variation in the TEP of the proposed OTFS based RA with decreasing false alarm probability $P_{FA}$. In Table-\ref{tab2}, we list the TEP of the proposed OTFS based RA for $P_{FA} = 1 \times 10^{-1}, 1 \times 10^{-2}, 1 \times 10^{-3}$ and $\nu_{max} = 0, 300, 600, 1200$ Hz. We consider a fixed RA call rate spatial density $\lambda = 1.0$ call/sec/sq. Km.
Further, we choose $N_1=1,5,5,7,$ respectively for $\nu_{max} = 0, 300, 600, 1200$ Hz.
From Fig.~\ref{fig8} we know that the TEP is close to its lower bound at $\rho = -5$ dB and therefore in Table-\ref{tab2} we consider a fixed $\rho = -5$ dB.
A Hamming window is used at the receiver.
From Table-\ref{tab2} it is observed that the TEP increases with decreasing $P_{FA}$.
However, the amount of increase in the TEP is small. In the following, we provide an explanation for this observation.

From the expression of the TEP in (\ref{equation39}) it is clear that a missed detection event for an RA preamble happens if either the maximum received RA preamble energy is less than the false alarm threshold $\mu$
or if the timing advance (TA) estimate does not lie between the smallest and the greatest round-trip propagation delays of the UT which transmitted that RA preamble.
In Fig.~\ref{fig8} we have seen that at high SNR ($\rho = -5$ dB), the TEP is close to the proposed collision based lower bound in (\ref{eqnPebnd}), i.e., at high SNR the TEP is dominated by collision events.

The proposed timing advance (TA) estimate is based on the delay domain index of the DDRE where maximum RA preamble energy is received,
and therefore in a collision scenario, RA preamble detection will most likely succeed for the UT whose transmitted RA preamble is received at the BS with highest energy.
The proposed TA estimate will then correspond to the round-trip propagation delay for this UT.
At high $\rho$, the received RA preamble energy from the other UTs (which also transmitted the same RA preamble) will also be high and will most likely exceed the false alarm threshold.
However, RA preamble transmission by most of these other UTs will still not get detected as the acquired TA estimate will not lie between the smallest and the greatest round-trip propagation delay of several of these UTs.\footnote{\footnotesize{This is because, the round-trip propagation delay
for the UT whose RA preamble is detected need not be the same as the round-trip propagation delay of other UTs which transmitted the same RA preamble.}}
In other words, at high SNR, for a given UT, a missed detection event happens mostly because the acquired TA estimate does not lie between the smallest and the greatest round-trip propagation delay for that UT, and not so much due to the false alarm threshold.

Therefore, at high SNR, the TEP is expected to be not so sensitive to change in the false alarm threshold. 
For a given SNR $\rho$, the required false alarm threshold will increase with decrease in the desired false alarm probability $P_{FA}$ (see (\ref{equation37})).
From the expression of the TEP in (\ref{equation39})  it is clear that this increase in the false alarm threshold will result in some increase in the TEP. 
However, as discussed above, at high SNR the TEP is dominated by collision events and is not so sensitive to an increase in false alarm threshold. This explains the observation in Table-\ref{tab2} that, the
TEP increases only slightly with decreasing $P_{FA}$. Further, for a given $P_{FA}$, the required false alarm threshold $\mu$ does not depend on collision events since
by definition a false alarm event happens for an RA preamble when no UT transmits that RA preamble but still the maximum energy received in the corresponding DDRE group is
greater than the false alarm threshold $\mu$.

\section{Conclusion}
In this paper we have considered the problem of uplink timing synchronization in OTFS modulation based wireless communication systems.
We have proposed a novel waveform design for the transmission of random access (RA) preambles based on OTFS modulation.
We have also proposed a method to estimate the round-trip propagation delay (also known as timing advance) between the BS and the UT
which transmitted the RA preamble. For a given probability of false alarm, we define the timing error probability (i.e., probability of missed detection of an RA preamble), as the
probability of the event that either the maximum received RA preamble energy is less than the false alarm threshold, or the proposed timing advance (TA) estimate for the RA preamble does not lie between the smallest and the greatest path delays of the UT which transmitted
that RA preamble. Our study reveals that in a multi-UT scenario, at high signal-to-noise ratio (SNR), the timing error probability (TEP) is limited by interference from adjacent RA preambles and
collision events (i.e., when two or more UTs transmit the same RA preamble). To reduce the impact of interference between adjacent RA preambles we have proposed the use of appropriate
time-domain windowing at the BS receiver. By analyzing the collision events, we have derived a lower bound expression for the TEP. Our analysis is supported by numerical simulations which confirm
that at high SNR, the TEP of the proposed OTFS based RA is significantly more robust to channel induced Doppler shift when compared to the TEP of 4G-LTE based RA.
Simulations also confirm that, at high SNR, the TEP of the proposed OTFS based RA is limited by collision events.       
In future we plan to extend the study done in this paper to multi-cell systems where the same RA preambles are used in neighbouring cells and where the BS
in each cell has multiple antennas.
\appendices

\section{Proof of Theorem \ref{thmu}}
\label{appendixC}
From (\ref{equation36}) and (\ref{equation37}), the false alarm probability for the single-UT scenario is given by 

{\vspace{-5mm}
\small
\begin{eqnarray}
P_{FA} & = & \mbox{Pr}\left( \max_{(k,l) \in {\mathcal R}_q }  \vert w[k,l] \vert^2 \, \geq \, \mu \right)
\end{eqnarray}
\normalsize
}
where ${\mathcal R}_q$ is given by (\ref{ddredef}).
With rectangular window $W_{rx}[n,m] = 1, n=0,1,\cdots, (N-1), m=0,1,\cdots, (M-1)$, from (\ref{Ynmeqn1}) and (\ref{xkleqn}) it follows that $w[k,l] \, \sim \, \mbox{i.i.d.}  \, {\mathcal C}{\mathcal N}(0, MNN_o)$.
Therefore $1 - P_{FA}  =   \Pi_{(k,l) \in {\mathcal R}_q }    \, \mbox{Pr}\left(  \vert w[k,l] \vert^2 \, < \, \mu \right),$ or equivalently

{\vspace{-5mm}
\small
\begin{eqnarray}
\label{pfamueqn}
P_{FA} & = &  1 \, - \, \Pi_{(k,l) \in {\mathcal R}_q }    \, \mbox{Pr}\left(  \vert w[k,l] \vert^2 \, < \, \mu \right) \nonumber \\
& = & 1  - \, \Pi_{(k,l) \in {\mathcal R}_q }    \left[ 1 - \mbox{Pr}\left(  \vert w[k,l] \vert^2 \, \geq  \, \mu \right)  \right]  \nonumber \\
&  \mya &   1  - \, \Pi_{(k,l) \in {\mathcal R}_q }    \left[ 1 - e^{- {\frac{\mu}{  \left( 1 + \lceil G B_c \rceil  \right)  \left\lfloor  \frac{B_c T_c }{1 + \lceil G B_c \rceil }  \right\rfloor N_0 } } } \right] \nonumber \\
&  \myb & 1 - \left[  1 - e^{- {\frac{\mu}{ \left( 1 + \lceil G B_c \rceil  \right)  \left\lfloor  \frac{B_c T_c }{1 + \lceil G B_c \rceil }  \right\rfloor  N_0 } } }  \right]^{N_1 \left( 1 + \lceil G B_c \rceil  \right) }
\end{eqnarray}
\normalsize}
where step (a) follows from the fact that $w[k,l] \, \sim \, \mbox{i.i.d.}  \, {\mathcal C}{\mathcal N}(0, MNN_o)$ and (\ref{mtvaleqn}), (\ref{Neqn1}). Step (b) follows from $\vert {\mathcal R}_q \vert = N_1 M = N_1 ( 1 + \lceil G B_c \rceil) $ (see (\ref{ddredef}) and (\ref{mtvaleqn})).
From (\ref{pfamueqn}) it is clear that $P_{FA}$ decreases monotonically with increasing $\mu$.
Hence for a desired $p_{fa}$, the unique threshold $\mu(p_{fa})$ which achieves this
false alarm probability, is given by (\ref{mupfaeqn}).

\vspace{-3mm}
\section{Proof of Theorem \ref{thm44}}
\label{appendix_E}
Since $\mbox{Pr}(\mbox{Timing error and Collision})$ is the probability of both the timing error and the collision event
happening simultaneously, a lower bound to the TEP $P_e$ is given by

{\vspace{-5mm}
\small
\begin{eqnarray}
\label{ineq1}
P_e & = & \mbox{Pr}(\mbox{{Timing error}})  >  \mbox{Pr}(\mbox{Timing error and Collision})  \nonumber \\
&  \hspace{-7mm}  = &  \hspace{-3mm}  \sum\limits_{k=2}^\infty  \, \mbox{Pr}(Q = k \, | \, Q \geq 1) \, \mbox{Pr}(\mbox{Timing error and Collision} \, | \, Q=k)   \nonumber \\
& \hspace{-7mm}  = & \hspace{-3mm}  \sum\limits_{k=2}^\infty   \mbox{Pr}(Q = k \, | \, Q \geq 1) {\Bigg \{}  \sum\limits_{p=2}^k \mbox{Pr}(\mbox{Timing error and} \, (p-1) \mbox{UTs}   \,\,  \mbox{Colliding with the given UT} \, | \, Q=k)  {\Bigg \}} \nonumber \\
&  \hspace{-7mm}   =   &   \hspace{-3mm}   \sum\limits_{k=2}^\infty  \, \mbox{Pr}(Q = k \, | \, Q \geq 1)  {\Bigg \{}   \sum\limits_{p=2}^k   {\Big [} \mbox{Pr}((p-1) \mbox{UTs Colliding with the}   \,\, \mbox{given UT} \, | \, Q=k) \, \times  \nonumber \\
& & \hspace{40mm} \mbox{Pr}( \mbox{Timing Error} \, | \, (p-1) \mbox{UTs Colliding}  \,\,  \mbox{with the given UT} \, \mbox{and} \,  Q=k) {\Big ]} {\Bigg \}}.
\end{eqnarray}
\normalsize}
When $(p-1)$ UTs collide with the given UT, the probability of no timing error for the given UT is $1/p$ since
the UTs are distributed uniformly in the cell and therefore the received RA waveform of the given UT can have the largest energy among all the colliding
$p$ waveforms with probability $1/p$. Therefore for $p \leq k$ we have

{\vspace{-5.5mm}
\small
\begin{eqnarray}
\label{ineq2}
\mbox{Pr}( \mbox{Timing Error} \, | \, (p-1) \mbox{UTs Colliding with} 
 \,\, \mbox{the given UT} \, \mbox{and} \,  Q=k)  =  1 -  (1/p).
\end{eqnarray}
\normalsize}
When $(Q=k)$, the probability that $(p-1)$ out of the $k$ UTs collide with the given UT is

{\vspace{-3.5mm}
\small
\begin{eqnarray}
\label{ineq3}
\mbox{Pr}((p-1) \mbox{UTs Colliding with the given UT} \, | \, Q=k) 
 =  C(k-1, p-1) \frac{(R - 1)^{k - p}}{ R^{k-1}}
\end{eqnarray} 
\normalsize}
where $C(m,n) \Define m!/(n! (m-n)!)$.
Using (\ref{ineq2}) and (\ref{ineq3}) in (\ref{ineq1}) we get

{\vspace{-4mm}
\small
\begin{eqnarray}
\label{eqn71}
P_e  &  >  &  \hspace{-3mm} \sum\limits_{k=2}^\infty \hspace{-1mm} {\Bigg \{} \mbox{Pr}(Q = k  |  Q \geq 1) \hspace{-1mm}  \left[  \sum\limits_{p=2}^k  \hspace{-1mm} C(k-1, p-1)  \frac{(p-1)}{p} \frac{(R - 1)^{k - p}}{ R^{k-1}} \right] {\Bigg \}}.
\end{eqnarray}
\normalsize}
Next, substituting $p$ by $l = p-1$ for the summation index in the R.H.S. of (\ref{eqn71}) we get

{\vspace{-4mm}
\small
\begin{eqnarray}
\label{prfeqnbnd}
\sum\limits_{p=2}^k \, C(k-1, p-1) \, \frac{(p-1)}{p} \frac{(R - 1)^{k - p}}{ R^{k-1}} & = &   \sum\limits_{l=1}^{k-1} \, C(k-1, l) \, \frac{l}{l+1} \frac{(R - 1)^{(k-1) - l}}{ R^{k-1}} \nonumber \\
&  \hspace{-115mm}  = & \hspace{-60mm}  \left( \frac{R-1}{R}\right)^{k-1} {\Bigg [}   \underbrace{\sum\limits_{l=0}^{k-1} \,   \frac{C(k-1, l)}{(R - 1)^l}}_{= \left( 1 + \frac{1}{R-1} \right)^{k-1}}   \, - \,   \sum\limits_{l=0}^{k-1} \, \frac{C(k-1, l)}{l+1}  \left( \frac{1}{R - 1}\right)^l   {\Bigg ]}  \nonumber \\
& \hspace{-115mm}  \mya &  \hspace{-60mm}   \left(\frac{R-1}{R}\right)^{k-1} \hspace{-1mm} {\Big [}    \left(\frac{R}{R-1}\right)^{k-1} \hspace{-4mm}  -   (R-1) \sum\limits_{l=0}^{k-1} \hspace{-1mm} C(k-1, l) \hspace{-1mm} \int_{0}^{\frac{1}{R-1}} \hspace{-1mm} x^l \, dx\,  {\Big ]} \nonumber \\
& \hspace{-115mm}  = &  \hspace{-60mm}   \left(\frac{R-1}{R}\right)^{k-1} \hspace{-1mm} {\Big [}    \left(\frac{R}{R-1}\right)^{k-1} \hspace{-4mm}  -   (R-1) \hspace{-1mm} \int_{0}^{\frac{1}{R-1}}  \hspace{-1mm} \underbrace{\sum\limits_{l=0}^{k-1} \hspace{-1mm} C(k-1, l)   x^l  }_{=  \left( 1 + x \right)^{k-1}}   dx\,  {\Big ]} \nonumber \\
&  \hspace{-115mm}  \myb & \hspace{-60mm}   1 \, - \, \frac{(R-1)^k}{R^{k-1}} \int_{0}^{\frac{1}{R-1}} \, (1 + x)^{k-1} \, dx = 1 - \frac{R^k - (R-1)^k}{k R^{k-1}}
\end{eqnarray}
\normalsize}
where step (a) follows from the binomial expansion of $\left( 1 + \frac{1}{R-1} \right)^{k-1}$ and also the fact that $\frac{1}{l+1} \frac{1}{(R- 1 )^l} = (R-1) \int_{0}^{\frac{1}{R-1}} \hspace{-1mm} x^l \, dx$.
Substitution of $\frac{1}{l+1} \frac{1}{(R- 1 )^l}$ by the integral expression $(R-1) \int_{0}^{\frac{1}{R-1}} \hspace{-1mm} x^l \, dx$, helps us to derive a closed form expression in step (b).
Using (\ref{prfeqnbnd}) in (\ref{eqn71}) then gives (\ref{eqnPebnd}) which completes the proof.

\section{Proof of Corollary \ref{corollary1}}
\label{prfcorollary1}
It suffices to show that the first derivative (w.r.t $R$) of each term inside the summation in the lower bound expression in (\ref{eqnPebnd}), is negative for all $R \geq 1$.
This is equivalent to showing that the first derivative of $B_k(R) \Define \left( R^k - (R-1)^k \right)/ R^{k-1}$ w.r.t. $R$ is positive for all $R \geq 1$ and all positive integer $k \geq 2$.
The first derivative of $B_k(R)$ w.r.t. $R$  is given by
\begin{eqnarray}
\label{deriv2}
\hspace{-4mm} \frac{d}{d R}  \left[ \frac{R^k - (R-1)^k}{R^{k-1}} \right] & \hspace{-3mm}  = &  \hspace{-3mm} \frac{R^k - \left[ k + R - 1\right](R - 1)^{k-1} }{R^k}.
\end{eqnarray}
The numerator of the R.H.S. in (\ref{deriv2}) is further given by

\vspace{-4mm}
\small
\begin{eqnarray}
R^k - \left[ k + R - 1\right](R - 1)^{k-1}  & \hspace{-3mm}  =  & \hspace{-3mm} \left(1 + (R-1)\right)^k - k (R-1)^{k-1}  - (R-1)^k \nonumber \\
& \hspace{-74mm}  \mya  & \hspace{-38mm}  \overbrace{1 + k (R-1)^{k-1} + (R-1)^k + \sum\limits_{p=1}^{k-2} C(k,p) (R - 1)^p}^{= \left(1 + (R-1)\right)^k}   \,  -  \, k (R-1)^{k-1}  - (R-1)^k \nonumber \\
& \hspace{-74mm}  =  & \hspace{-38mm}   1 + \sum\limits_{p=1}^{k-2} C(k,p) (R - 1)^p \, \mygtb \, 1
\end{eqnarray}
\normalsize
where in step (a) we have used the binomial expansion for $(1 + (R-1))^k$ and $C(k,p)$ is defined in (\ref{eqnPebnd}).
The inequality in step (b) follows from the fact that $R \geq 1$.
Using this inequality in (\ref{deriv2}) it follows that the first derivative of $B_k(R)$ is positive for all positive integer $k \geq 2$ and $R \geq 1$. This then completes the proof.

\newpage 

\begin{IEEEbiographynophoto}{Alok Kumar Sinha}(S'19)
received the B.Tech degree in Electronics and Telecommunication Engineering from BPUT, Odisha, India, in 2011 and the M.E. degree in Electronics and Communication from the Birla Institute of Technology, Mesra, India, in 2013. He is currently pursuing the Ph.D. degree from the Indian Institute of Technology, Delhi, India. His research interests include massive MIMO systems, wireless communication, and OTFS.
\end{IEEEbiographynophoto}
\vspace{-5mm}
\begin{IEEEbiographynophoto}{Saif Khan Mohammed}(SM'15) is currently an Associate Professor at the Department of Electrical Engineering, I.I.T. Delhi. He received the B.Tech degree in Computer Science and Engineering from the Indian Institute of Technology (I.I.T.), New Delhi, India, in 1998 and the Ph.D. degree in Electrical and Communication Engineering from the Indian Institute of Science, Bangalore, India, in 2010.  His main research interests include wireless communication using large antenna arrays, coding and signal processing for wireless communication systems.
He currently serves as an Editor for the IEEE Transactions on Wireless Communications and the IEEE Wireless Communications Letters.
Dr. Mohammed was awarded the 2017 NASI Scopus Young Scientist Award and the Teaching Excellence Award at I.I.T. Delhi for 2016-17. He is also a recipient of the Visvesvaraya Young Faculty Fellowship by the Ministry of Electronics and IT, Govt. of India (2016 - 2019).
\end{IEEEbiographynophoto}
\vspace{-5mm}
\begin{IEEEbiographynophoto}{Raviteja Patchava}
received the M.E. degree in telecommunications from the Indian Institute of Science, Bangalore, India in 2014, and the Ph.D. degree from Monash University, Australia in 2019. From 2014 to 2015, he worked at Qualcomm India Private Limited, Bangalore, on WLAN systems design. He is currently working as a postdoctoral research fellow with Monash University, Australia. His current research interests include new waveform designs for next generation wireless systems. He was a recipient of Prof. S. V. C. Aiya Medal from the Indian Institute of Science, Bangalore in 2014. He also received the best student paper award for his publication at the SPAWC 2018 conference in Greece.
\end{IEEEbiographynophoto}
\vspace{-5mm}
\begin{IEEEbiographynophoto}{Yi Hong}(S'00--M'05--SM'10) is currently a Senior lecturer at the Department of Electrical and Computer Systems Eng.,
Monash University, Melbourne, Australia.
She obtained her Ph.D. degree in Electrical Engineering and Telecommunications 
from the University of New South Wales (UNSW), Sydney, and received   
the {\em NICTA-ACoRN Earlier Career Researcher Award} at the {\em Australian Communication
Theory Workshop}, Adelaide, Australia, 2007. She currently serves on the Australian Research Council College of Experts (2018-2020). 
Dr. Hong was an Associate Editor for {\em IEEE Wireless Communication Letters} and {\em Transactions on Emerging Telecommunications Technologies (ETT)}.
She was the General Co-Chair of {\em IEEE Information Theory Workshop} 2014, Hobart; the Technical Program Committee Chair of
{\em Australian Communications Theory Workshop} 2011, Melbourne; and the Publicity Chair
at the {\em IEEE Information Theory Workshop} 2009, Sicily. She was a Technical Program Committee member for
many IEEE leading conferences. Her research interests include
communication theory, coding and information theory with applications to telecommunication engineering.
\end{IEEEbiographynophoto}
\vspace{-5mm}
\begin{IEEEbiographynophoto}{Emanuele Viterbo}(M'95--SM'04--F'11)
is currently a professor in the ECSE Department and an Associate Dean in Graduate Research at
Monash University, Melbourne, Australia.
From 1990 to 1992 he was with the European Patent Office, The Hague, The Netherlands, as a patent examiner
in the field of dynamic recording and error-control coding. Between 1995 and 1997 he held a post-doctoral
position in the Dipartimento di Elettronica of the Politecnico di Torino. In 1997-98 he was a post-doctoral research fellow
in the Information Sciences Research Center of AT T Research, Florham Park, NJ, USA.
From 1998-2005, he worked as Assistant Professor and then Associate Professor, in Dipartimento di Elettronica at Politecnico di Torino.
From 2006-2009, he worked in DEIS at University of Calabria, Italy, as a Full Professor.
Prof. Emanuele Viterbo is an ISI Highly Cited Researcher since 2009. He is Associate Editor of IEEE Transactions on Information Theory,
European Transactions on Telecommunications and Journal of Communications and Networks, and
Guest Editor for IEEE Journal of Selected Topics in Signal Processing: Special Issue Managing Complexity in Multiuser MIMO Systems.
Prof. Emanuele Viterbo was awarded a NATO Advanced Fellowship in 1997 from the Italian National Research Council.
His main research interests are in lattice codes for the Gaussian and fading channels, algebraic coding theory,
algebraic space-time coding, digital terrestrial television broadcasting, digital magnetic recording, and irregular sampling.
\end{IEEEbiographynophoto}

\end{document}